\newtheorem{theorem}{Theorem}
\begin{document}
\title{User-Centric Communication Service Provision for Edge-Assisted Mobile Augmented Reality}

\author{    Conghao~Zhou,~\IEEEmembership{Member,~IEEE,}
            Jie~Gao,~\IEEEmembership{Senior~Member,~IEEE,}
            Shisheng~Hu,~\IEEEmembership{Member,~IEEE,}
            Nan~Cheng,~\IEEEmembership{Senior Member,~IEEE,}            
            Weihua~Zhuang,~\IEEEmembership{Fellow,~IEEE,}
            and Xuemin~(Sherman)~Shen,~\IEEEmembership{Fellow,~IEEE}

            \thanks{C.~Zhou is with the School of Telecommunications Engineering, Xidian University, China. He was previously with the Department of Electrical and Computer Engineering, University of Waterloo, Waterloo, ON N2L 3G1, Canada (e-mail:~c89zhou@uwaterloo.ca).}
            \thanks{J.~Gao is with the School of Information Technology, Carleton University, Ottawa, ON  K1S 5B6, Canada (email:~jie.gao6@carleton.ca).}
            \thanks{S.~Hu, W.~Zhuang, and X.~Shen are with the Department of Electrical and Computer Engineering, University of Waterloo, Waterloo, ON N2L 3G1, Canada (e-mail:\{s97hu, sshen, wzhuang\}@uwaterloo.ca).}
            \thanks{N. Cheng is with the School of Telecommunications Engineering, Xidian University, China (e-mail:~nancheng@xidian.edu.cn).}
            \thanks{Shisheng Hu is the corresponding author.}
            \thanks{Part of this work was presented at IEEE/CIC ICCC 2024~\cite{zhou2024user}.}
        }

\IEEEtitleabstractindextext{%
\begin{abstract}
Future 6G networks are envisioned to facilitate edge-assisted mobile augmented reality (MAR) via strengthening the collaboration between MAR devices and edge servers. In order to provide immersive user experiences, MAR devices must timely upload camera frames to an edge server for simultaneous localization and mapping (SLAM)-based device pose tracking. In this paper, to cope with user-specific and non-stationary uplink data traffic, we develop a digital twin (DT)-based approach for user-centric communication service provision for MAR. Specifically, to establish DTs for individual MAR devices, we first construct a data model customized for MAR that captures the intricate impact of the SLAM-based frame uploading mechanism on the user-specific data traffic pattern. We then define two DT operation functions that cooperatively enable adaptive switching between different data-driven models for capturing non-stationary data traffic. Leveraging the user-oriented data management introduced by DTs, we propose an algorithm for network resource management that ensures the timeliness of frame uploading and the robustness against inherent inaccuracies in data traffic modeling for individual MAR devices. Trace-driven simulation results demonstrate that the user-centric communication service provision achieves a 14.2\% increase in meeting the camera frame uploading delay requirement in comparison with the slicing-based communication service provision widely used for 5G.
\end{abstract}

\begin{IEEEkeywords}
6G, mobile augmented reality, user-centric service provision, digital twin, immersive communication.
\end{IEEEkeywords}}

\maketitle

\IEEEdisplaynontitleabstractindextext

\IEEEpeerreviewmaketitle

\section{Introduction}\label{sec1}

Immersive communications are expected to extend the enhanced Mobile Broadband (eMBB) communications of 5G networks for providing users with rich immersive experiences in future 6G networks~\cite{zhou2024user_wcm}. Augmented reality (AR), a typical form of immersive communications, enables seamless integration of virtual objects into the physical environment surrounding human users. Driven by the increasing demand for immersive experiences, AR accessible on portable devices such as smart glasses, referred to as mobile AR (MAR), is gaining widespread attention as one of the emerging applications in the 6G era. All MAR applications need device pose tracking, a resource-intensive procedure fundamental to the 3D alignment of virtual objects with the physical environment~\cite{jin2023ebublio,huzaifa2021illixr}. Enabling device pose tracking is a key challenge for current MAR devices due to their resource limitations such as limited battery power. To overcome this challenge, leveraging the resources of edge servers via edge-assisted MAR is a promising paradigm~\cite{chen2023networked}.

Although communication service provision for edge-assisted applications has been investigated in the context of 5G networks~\cite{wang2022leaf,wang2017joint}, supporting edge-assisted device pose tracking for MAR applications in future 6G networks requires novel solutions for the following two reasons~\cite{shen2021holistic}. First, human movements such as changing gaze points are incessant in MAR applications, and differences in the movements of different users result in distinctive network resource demands even if they are using the same MAR application~\cite{ran2020multi}. Most existing communication service provision approaches rely on \emph{service-oriented} demand modeling, which cannot capture the variances in service demands across different users to support personalized MAR user experiences. For example, slicing-based communication service provision widely adopted in 5G networks for aggregated video-type data traffic transmission~\cite{navarro2020survey} lacks the capability to characterize the video traffic patterns of individual users. Second, to deal with the uncertainties in human movements, current MAR has incorporated a complex localization-based operational mechanism~\cite{chen2023networked}, i.e., simultaneous localization and mapping (SLAM)-based device pose tracking~\cite{campos2021orb} with multiple interacting functionality modules, from the perspective of the application layer. Current communication service provision approaches developed from the networking perspective mostly overlook the impact of this intricate operational mechanism for MAR applications~\cite{li2022risk,li2020data}, thereby unable to natively cope with uncertainties in human movements~\cite{wu2023characterizing,shen2023toward}. As a result, \emph{user-centric} communication service provision that can properly take the impact of MAR operational mechanism into account and characterize user-specific service demands is essential for future 6G networks to enhance edge-assisted device pose tracking in MAR.

The digital twin (DT) technique can support user-centric communication service provision for device pose tracking in MAR via user device virtualization~\cite{shen2021holistic}. The use of DT in communications and networking is still in its infancy. We explore establishing DTs as digital representations of individual MAR devices. By doing so, we leverage DTs to obtain and process high-quantity, high-quality data related to MAR devices~\cite{zhou2024user_wcm}. Unlike mathematical methods that explicitly model certain characteristics, such as the statistical frame uploading pattern of an MAR device, DTs can capture implicit yet crucial information about the characteristics using data-driven techniques~\cite{zhou2024digital}. Additionally, DTs can be customized to provide the flexibility required to capture the unique characteristics of individual entities, e.g., the service demand pattern of each MAR device~\cite{sun2024knowledge}.

Despite these advantages, research on DTs in communication and networking is still in its infancy. Three issues arise when establishing a user DT supporting user-centric communication service provision for edge-assisted MAR. First, establishing a proper data model that effectively captures the impact of various factors in the intricate SLAM-based device pose tracking on the user-specific service demands is challenging~\cite{linowes2017augmented}. The selection of key factors for proper demand modeling and the organization of the data corresponding to these factors in a structured way are not explored yet for MAR applications. Second, temporal variations in user movements may lead to non-stationary service demand in MAR. For example, the data traffic load for uploading camera frames may surge intermittently following an event of device pose tracking loss~\cite{ran2020multi}. Non-stationary data traffic presents a challenge to effectively characterizing user-specific service demands. Third, since no modeling technique can perfectly capture user-specific service demands, DT-based communication service provision decisions should account for modeling inaccuracies~\cite{navarro2020survey,hu2023adaptive}. However, obtaining a proper representation of modeling inaccuracies, which is crucial for optimizing DT-based communication service provision, is also a challenge.

In this paper, we investigate communication service provision to facilitate timely camera frame uploading, which is essential for supporting SLAM-based device pose tracking in edge-assisted MAR. To address the aforementioned challenges, we establish an aug\underline{\textbf{M}}ented re\underline{\textbf{A}}lity use\underline{\textbf{R}} digita\underline{\textbf{L}} tw\underline{\textbf{IN}} (MARLIN), building on our general DT framework~\cite{shen2021holistic}, and propose a DT-based user-centric communication service provision approach to support edge-assisted device pose tracking for MAR in 6G networks. MARLIN consists of a customized data model to characterize the service demand of each individual MAR device and various DT operation functions to update the data model. The contributions of this paper are summarized as follows:
    \begin{itemize}
        \item We establish a data model customized for edge-assisted MAR, including well-structured data attributes and differentiated data update mechanisms, which supports capturing the unique service demands of individual MAR devices flexibly.

        \item We define two novel DT operation functions, which jointly capture the implicit impact of the operational mechanism of SLAM-based device pose tracking on the non-stationary uplink data traffic from each MAR device. One of them implements two data-driven traffic modeling methods with different complexities, and the other adaptively switches between these two methods in response to data traffic variations. 

        \item We propose a user-centric communication service provision algorithm based on MARLIN, which ensures robustness against the inherent inaccuracies resulting from our user-level data traffic modeling.

        \item We demonstrate the potential of MARLIN in facilitating user-centric communication service provision, by flexibly adapting data granularity to various data-driven modeling methods, through trace-driven simulations.

    \end{itemize}

The remainder of this paper is organized as follows. Section~\ref{sec2} summarizes related works. Section~\ref{sec3} presents the considered scenario, system models, and problem formulation. Section~\ref{sec4} and Section~\ref{sec5} introduce MARLIN and the proposed DT-based user-centric communication service provision, respectively. Section~\ref{sec6} presents the simulation results, and Section~\ref{sec7} concludes this paper.

\section{Related Works}\label{sec2}

In this section, we first review the related literature on device pose tracking techniques used in MAR. Then, we summarize existing research works on communication service provision from the networking perspective. 

\subsection{Device Pose Tracking in MAR}

To align virtual objects with the physical environment within the field of view of each MAR device, tracking the MAR device's pose, including its position and orientation, is fundamental~\cite{han2022comic}. The SLAM-based device pose tracking techniques and their variants are widely applied in MAR~\cite{linowes2017augmented}. Such techniques enable accurate 3D alignment by leveraging the 3D location information of physical objects. A trending research direction is using various visual features, e.g., binary robust independent elementary features (BRIEF)~\cite{campos2021orb} in the design of SLAM or MAR systems to enhance the tracking accuracy\cite{khosoussi2019reliable,davison2007monoslam}. Many researchers focus on combining traditional visual features and information collected from additional sensors, e.g.,~LiDAR~\cite{chou2021efficient} and inertial measurement units~\cite{carlone2018attention}, to improve the tracking accuracy. As a result, the enhanced tracking accuracy comes at the cost of increased operational complexity and resource demands, thereby posing a significant challenge to resource-constrained mobile devices~\cite{zhou2024digital}. 

To overcome the resource constraints of MAR devices, many recent studies have adopted cloud or mobile edge computing paradigms~\cite{chen2018marvel}. Specifically, the SLAM system for device pose tracking can be divided into two modules: a computing-intensive mapping module and a lightweight tracking module, which are handled by a cloud/edge server and a mobile device, respectively~\cite{chen2023networked}. Ben Ali~\emph{et~al.} develop a practical edge-assisted SLAM system for the mobile device, demonstrating the feasibility and potential of edge-assisted device pose tracking~\cite{ben2022edge}. The authors in~\cite{chen2023adaptslam} and~\cite{zhou2024digital} design adaptive edge-assisted device pose tracking schemes aim to overcome constrained uplink data rates and time-varying uplink data rates, respectively, between the edge server and an MAR device. In addition, to support accurate virtual object rendering in multi-user MAR applications, existing studies also investigate coordinate synchronization for the device pose tracking of different MAR devices~\cite{ran2020multi,dhakal2022slam,ren2020edge}. 

Different from aforementioned studies that advance device pose tracking from the application perspective, our work emphasizes service provision from the communication and networking perspective to support camera frame uploading, which is crucial for ensuring the timeliness of device pose tracking in edge-assisted MAR.

\subsection{Communication Service Provision}

Communication service provision in existing works concentrates on configuring networks to manage network resources for meeting service demands. Although communication service provision is a classic research topic in networking, its significance persists due to the increasingly diversified and demanding services~\cite{shen2021holistic}.

In the early phase of wireless communication network research, communication service provision primarily focuses on the on-demand allocation of communication resources such as the radio spectrum and power at each base station. By modeling communication service demands, e.g.,~data packet arrival patterns and device location distributions, communication resources can be allocated according to the service demand modeling while considering constraints on resources~\cite{lin2018towards}, interference management~\cite{wu2022demand}, etc. With the advent of 5G, mobile edge computing (MEC) results in a plethora of computing services that require communication, computing, and data storage resources. Some works investigate multi-resource management problems for MEC in network scenarios such as drone-assisted networks~\cite{zhang2020latency,qu2021service,zhang2021aoi} and terrestrial networks~\cite{qu2021resilient}. Meanwhile, network slicing, as an innovation in 5G, enables the establishment of isolated virtual networks, i.e., slices, over the same physical network infrastructure and differentiated resource management for different services to satisfy their respective service requirements~\cite{ye2018dynamic}. As a result, slicing-based communication service provision approaches are developed for various network scenarios~\cite{jia2021vnf,li2020data,qiu2022online,li2022risk,wang2020sfc,ye2018dynamic}. These works focus on minimizing the network resource consumption while satisfying service requirements, including the timely completion of dynamically requested computing tasks~\cite{jia2021vnf}, the reliability of resource management~\cite{qiu2022online,li2022risk}, and end-to-end delay requirements~\cite{wang2020sfc,ye2018dynamic}.

Despite these advancements, few studies have investigated communication service provision for MAR, especially for device pose tracking. Since the MAR operational mechanism has a significant impact on the service demands of MAR devices, existing communication service provision approaches that overlook the influence of the intrinsic MAR operational mechanism may fail to accurately capture MAR service demands~\cite{chen2023networked}. In addition, the slicing-based communication service provision primarily targets meeting the aggregated service demands of users running the same application, rather than user-specific service demands. As a result, they are not suitable for MAR applications, which involve uncertain user-specific movements~\cite{li2022risk,li2020data}.  

Different from existing works, we aim to achieve user-centric communication service provision for MAR applications. Specifically, we establish a DT for each individual MAR device, which supports a data-driven method to characterize the impact of the complex MAR operational mechanism on user-specific service demands.    

\section{Edge-assisted Device Pose Tracking in MAR}\label{sec3}

\subsection{Considered Scenario}

When a user runs an MAR application, the position and orientation (jointly referred to as the \emph{3D pose}) of the MAR device change over time due to the user movements. The MAR device captures camera frames periodically with a fixed frame rate and tracks its 3D pose based on the captured camera frames, which is crucial for rendering virtual objects at correct locations within the user’s field of view~\cite{campos2021orb}. 

    \begin{figure}[t]
        \centering
        \includegraphics[width=0.49\textwidth]{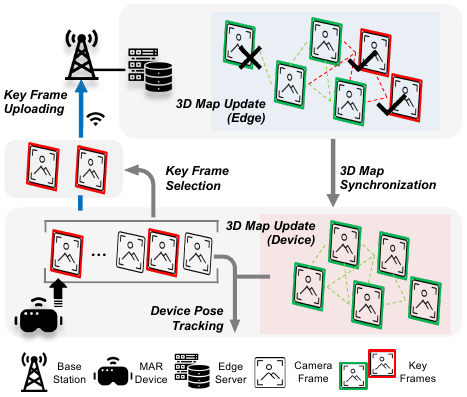}
        \caption{An illustration of edge-assisted device pose tracking for one MAR device.}\label{system}
    \end{figure}

An edge server deployed at a base station (BS) assists the device pose tracking of a set of MAR devices, denoted by~$\mathcal{N}$, within the communication coverage of the BS. Currently, SLAM techniques are widely adopted to support device pose tracking in MAR~\cite{chen2023networked}. An example paradigm of edge-assisted device pose tracking for one MAR device is illustrated in Fig.~\ref{system}, in which the edge server and the device collaboratively track its 3D pose. Specifically, the edge server includes a resource-intensive mapping module to construct and update a 3D representation (i.e., a 3D map) of the physical environment surrounding each MAR device for enabling device pose tracking, while the MAR device includes a lightweight tracking module for calculating its real-time 3D pose based on the 3D map~\cite{chen2023adaptslam}. For each MAR device, both the device and the edge server maintain their own 3D maps during device pose tracking. For clarity, we refer to them as the ``3D map (device)'' and ``3D map (edge)'' in this paper, respectively.

As shown in Fig.~\ref{system}, edge-assisted device pose tracking can be divided into four key steps~\cite{ben2022edge}: (i) each MAR device selects a subset of recently captured camera frames as the key frames and uploads these key frames to the edge server over a wireless communication link; (ii) the mapping module equipped at the edge server updates the 3D map (edge) by incorporating newly uploaded key frames along with previously stored ones; (iii) the edge server sends the updated 3D map (edge) to this MAR device for 3D map synchronization; and (iv) the tracking module at each MAR device updates the 3D map (device) based on the information on the received 3D map (edge) and leverages the updated 3D map (device) to locally calculate the device pose for every camera frame. These four steps are performed iteratively during device pose tracking, as detailed in the following subsections.

\subsection{Key Frame Uploading}\label{sec32}

To cope with uncertain user movements, each MAR device with a fixed frame rate periodically selects a number of key frames from recently captured camera frames and uploads them to the edge server for updating the 3D map (edge). Let $\mathcal{F}_{n}$ denote the set of all camera frames captured by MAR device~$n \in \mathcal{N}$. We refer to the duration of $F$ consecutive camera frames as a time slot, i.e., the period of key frame uploading, and denote the set of all time slots by~$\mathcal{T}$. Let~$\mathcal{F}_{n,t} \subseteq \mathcal{F}_{n}$ denote the set of camera frames captured during time slot~$t \in \mathcal{T}$. 

For camera frame~$f \in \mathcal{F}_{n}$ of MAR device~$n$, we use~$a_{n,f} \in \left\{0,1 \right\}$ to indicate whether it is a key frame or not. Let $a_{n, f} = 1$ if camera frame~$f$ is selected as a key frame, and $a_{n, f} = 0$ otherwise. The detail about key frame selection will be presented in Subsection~\ref{sec33}. Denote the set of key frames selected in time slot~$t$ by~$\mathcal{K}_{n,t} = \left\{f \in \mathcal{F}_{n, t} | a_{n,f} = 1 \right\}$. Considering the time-varying number of key frames for uploading~\cite{ben2022edge}, we define the number of key frames in each time slot as a random variable, given by: 
    \begin{equation}\label{eq1}
        \tilde{k}_{n,t} = \sum_{f \in \mathcal{F}_{n, t}}{a_{n,f}}, \,\, \forall t \in \mathcal{T}.
    \end{equation}
At the end of time slot~$t$, MAR device~$n$ uploads the set of selected key frames,~i.e.,~$\mathcal{K}_{n,t}$, to the edge server. 

Proactive radio spectrum resource management is necessary to ensure timely key frame uploading for facilitating real-time device pose tracking in MAR~\cite{chen2023networked}. Denote the radio spectrum bandwidth reserved for MAR device~$n$ in time slot~$t$ for key frame uploading by~$b_{n,t}$. For proactive radio spectrum resource management, the value of~$b_{n,t}$ should be determined at the beginning of each time slot. Given~$b_{n,t}$, the uplink data rate of MAR device~$n$ within time slot~$t$, denoted by~$r_{n,t}$, is as follows:
    \begin{equation}\label{}
        r_{n,t} = b_{n,t} \log (1 + \gamma_{n,t}), \,\, \forall t \in \mathcal{T}, n \in \mathcal{N},
    \end{equation}
where~$\gamma_{n,t}$ represents the estimated signal-to-noise ratio (SNR) at MAR device~$n$ in time slot~$t$~\cite{belgiovine2021deep}. We denote the volume of data (in bits) for uploading each camera frame by~$\alpha$, the same for all camera frames. 

To achieve user-centric communication service provision for MAR, we target ensuring the timely key frame uploading of each individual MAR device, rather than focusing on the performance averaged over multiple devices as in slicing-based communication service provision~\cite{li2020data}. Uploading the set of key frames selected by MAR device~$n$ in time slot~$t$, i.e.,~$\tilde{k}_{n,t}$, the following constraint should be satisfied~\cite{atawia2016joint}:
  
    \begin{equation}\label{eq2}
        P ( \alpha \tilde{k}_{n,t} \leq T^\text{r} r_{n,t} ) \ge \epsilon, \,\, \forall t \in \mathcal{T}, n \in \mathcal{N},
    \end{equation}
where~$\epsilon \in [0, 1]$ represents the required reliability of communication service provision, and~$T^\text{r}$ represents the maximum tolerable total transmission duration for uploading the selected key frames from an MAR device. 

\subsection{3D Map Update \& Synchronization}\label{sec33}

The key frames uploaded by each MAR device are used to form or update the 3D map (edge) at the edge server. The 3D map consists of a set of key frames uploaded by the MAR device over time as well as the feature points (FPs), e.g., a wall corner, detected from each key frame~\cite{campos2021orb}.\footnote{A 3D map consists of a set of key frames and the set of FPs identified in each key frame~\cite{campos2021orb}. Since the update of FPs is determined by the update of key frames, we use the adding and deleting of key frames to represent the update of a 3D map for simplicity.} The updates of the two 3D maps are different but correlated.

The set of key frames stored in the 3D map (edge) of MAR device~$n \in \mathcal{N}$ at time slot~$t$, denoted by~$\mathcal{M}^\text{E}_{n,t} \subseteq \mathcal{F}$, changes as shown in the following equation~\cite{chen2023adaptslam,zhou2024digital}:
    \begin{equation}\label{eq3}
        \mathcal{M}^\text{E}_{n,t} = \left\{ \mathcal{M}^\text{E}_{n,t-1} \cup \mathcal{K}_{n,t-1} \right\} \backslash \mathcal{C}_{n,t-1},
    \end{equation}
where~$\mathcal{C}_{n,t-1} \subseteq \mathcal{M}^\text{E}_{n,t-1}$ represents the set of key frames removed from the 3D map (edge)~$\mathcal{M}^\text{E}_{n,t-1}$ of MAR device~$n$ in time slot~$t$. We assume that the update of~$\mathcal{M}^\text{E}_{n,t}$ can be accomplished at the end of each time slot. After the 3D map (edge) update, the set~$\mathcal{M}^\text{E}_{n,t}$ is downloaded by MAR device~$n$ for updating the 3D map (device)~\cite{ben2022edge}.

In MAR applications, key frame selection and uploading is conducted based on an implicit policy, denoted by~$\Pi$, used by the MAR operational mechanism of device pose tracking~\cite{campos2021orb}. Generally, in a policy, a frame is selected as a key frame if it differs sufficiently from its preceding key frames and thus can provide adequate new environment information, e.g., FPs, while having sufficient overlap with preceding for easily implementing device pose tracking~\cite{campos2021orb}. While factors influencing key frame selection and uploading may differ in existing policies~\cite{ben2022edge,chen2023adaptslam,zhou2024digital}, a general policy~$\Pi$ can be formulated as follows:
    \begin{equation}\label{eq4}
        a_{n,f} = \Pi (\mathcal{M}^\text{D}_{n,f}, \mathcal{X}_{n,f}, f),\,\, \forall f \in \mathcal{F}_{t}, t \in \mathcal{T},
    \end{equation}
where~$\mathcal{M}^\text{D}_{n,f}$ denotes the set of key frames in the 3D map (device) of MAR device~$n$, and~$\mathcal{X}_{n,f} = \left\{ f' \in \mathcal{F} | f-X \leq f' < f  \right\}$ represents the set of $X$ preceding camera frames when camera frame~$f$ is captured by MAR device~$n$. Given the value of~$a_{n,f}$, the 3D map (device) is updated as follows:
    \begin{equation}\label{eq41}
         \mathcal{M}^\text{D}_{n,f} = 
        \begin{cases} 
            \mathcal{M}^\text{D}_{n,f-1} \cup \left\{  f \right\} & \text{if } a_{n,f-1} = 1; \\
            \mathcal{M}^\text{D}_{n,f-1} & \text{otherwise }.
        \end{cases}
    \end{equation}
Given \eqref{eq4} and~\eqref{eq41}, key frame selection and uploading, depending on policy~$\Pi$, can be abstracted as a sequential decision making process~\cite{zhou2024digital}.

\subsection{Problem Formulation}

To efficiently support edge-assisted device pose tracking in MAR, we formulate a user-centric communication service provision problem with the objective of minimizing the radio spectrum resource reserved for the key frame uploading of all MAR devices, as follows:
    \begin{subequations}\label{p1}
        \begin{align}
            \textrm{P1:} &\,\, \min_{ \{ b_{n,t} \}_{n \in \mathcal{N}, t \in \mathcal{T}} } \sum_{n \in \mathcal{N} }{\sum_{ t \in \mathcal{T} }{ b_{n,t} }}\\
            \textrm{s.t.} &\,\, P( \alpha \tilde{k}_{n,t} \leq T^\text{r} r_{n,t} ) \ge \epsilon, \,\, \forall n \in \mathcal{N}, t \in \mathcal{T},
        \end{align}
    \end{subequations}
where constraint~(\ref{p1}b) ensures that the key frame uploading of each individual MAR device satisfies the delay requirement with a sufficiently high probability.\footnote{We focus on the camera frame uploading delay that is affected by decisions for the data plane, e.g., network resource provisioning decisions, rather than delay stemming from the control plane, e.g.,~control signaling delay.} In Problem~P1, random variables~$\tilde{k}_{n,t}$ are unknown~\emph{a priori}, and temporal variations in data traffic of each MAR device may be non-stationary. Existing communication service provision approaches optimize radio spectrum resource reservation based on either mathematical modeling or data-driven prediction~\cite{navarro2020survey}. However, mathematically modeling~$\tilde{k}_{n,t}$ is intractable due to the implicit policy~$\Pi$. While data-driven approaches, e.g.,~deep neural networks (DNNs), potentially outperform mathematical modeling in the accuracy of approximating a complex policy~$\Pi$, existing approaches overlook the impact of the MAR operational mechanism of device pose tracking on the data traffic for key frame uploading~\cite{chen2023adaptslam,ben2022edge}, and thereby struggle to adapt to non-stationary data traffic. 

To solve Problem~P1, we develop a novel digital twin (DT)-based approach, as shown in Fig.~\ref{udt}. First, we establish a user DT that can characterize the impact of the MAR operational mechanism on the data traffic from individual MAR devices. Second, we propose a user-centric communication service provision method with the consideration of potential modeling inaccuracies. The two modules are presented in Sections~\ref{sec4} and~\ref{sec5}, respectively.

\section{MAR User Digital Twin Establishment}\label{sec4}

In this section, we customize an aug\underline{\textbf{M}}ented re\underline{\textbf{A}}lity use\underline{\textbf{R}} digita\underline{\textbf{L}} tw\underline{\textbf{IN}}, called MARLIN, for each individual MAR device. Our designs for MARLIN evolves from the framework presented in~\cite{zhou2024digital,zhou2024user_wcm}. As illustrated by the light grey block in Fig.~\ref{udt}, MARLIN, comprising an \emph{MAR user profile} (MUP) and two \emph{user DT operation functions} (UDTOFs), is deployed at the BS and maintained by the controller to facilitate communication service provision for MAR. In the MUP, we define a \emph{data model} customized for MAR, consisting of a set of well-structured data attributes chosen to characterize the service demand of the corresponding MAR device and to support our user-centric communication service provision. The two UDTOFs are leveraged to process the collected raw data in various ways for updating the data within the MUP accordingly. We introduce the two UDTOFs first, followed by the MUP.

    \begin{figure}[t]
        \centering
        \includegraphics[width=0.49\textwidth]{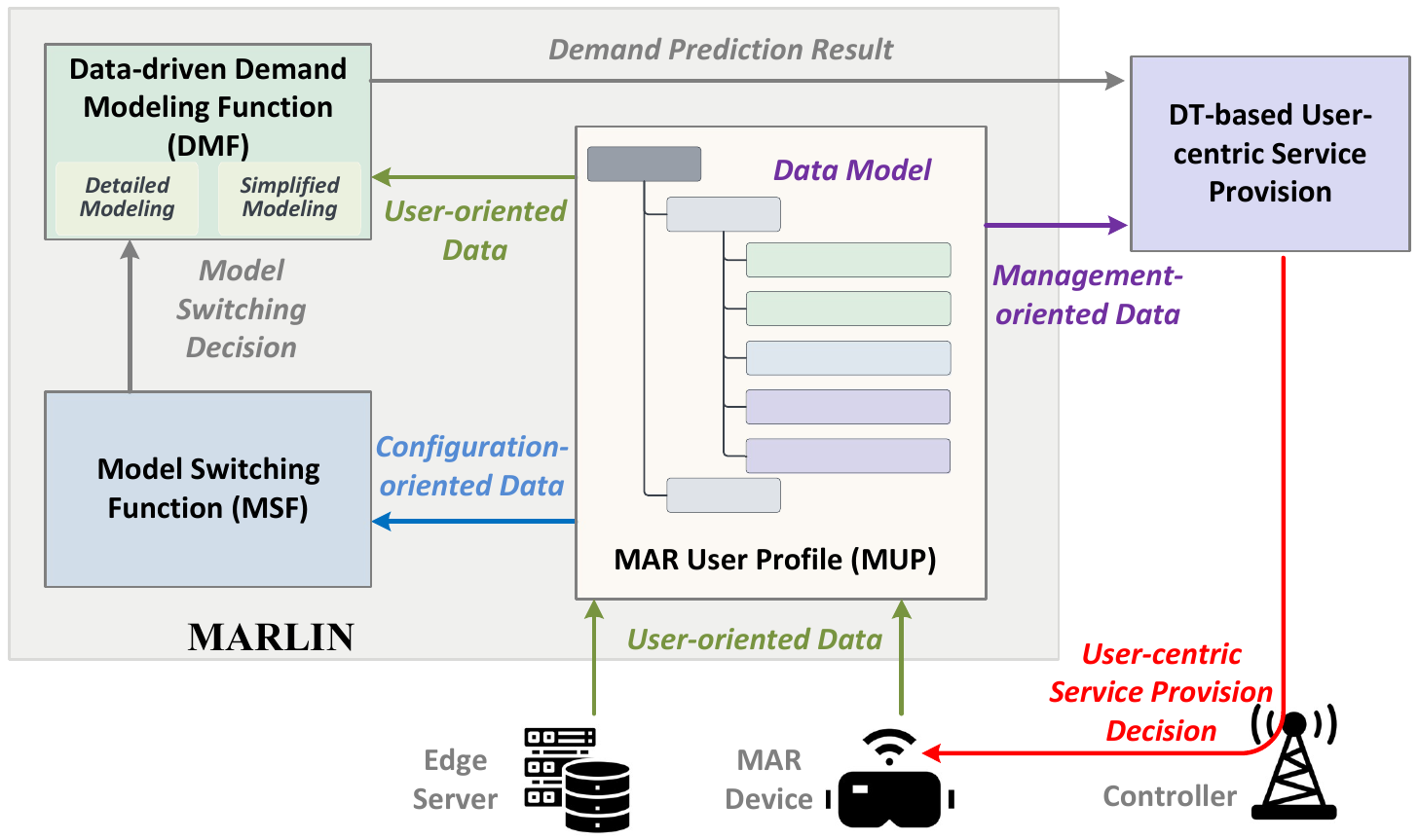}
        \caption{The workflow of the proposed DT-based user-centric communication service provision approach.}\label{udt}
    \end{figure} 

\subsection{Data-driven Demand Modeling Function (DMF)}

User-centric communication service provision for MAR requires an accurate model to capture the uplink data traffic of key frame uploading at each individual MAR device. To this end, we use a Markov decision process to model the sequential decision making of each MAR device on key frame selection and uploading. 

For MAR device~$n$, we denote a state by~$\mathbf{s}_{n,f} \in \mathcal{S}_{n}$ and define~$a_{n,f} \in $ as an action, i.e., determining whether camera frame~$f$ is a key frame or not, based on the state. Define state transition probability function~$P(\mathbf{s}_{n,f+1}|\mathbf{s}_{n,f}, a_{n,f}) := \mathcal{S}_{n} \times \mathcal{A}_{n} \rightarrow \mathcal{S}_{n}$. The choice of the action~$a_{n,f}$ depends on the policy~$\Pi$ mentioned in~\eqref{eq4}. Given actions~$\mathbf{a}_{n,t} = [a_{n, f}]_{\forall f \in \mathcal{F}_{n,t}} \in \mathcal{A}_{n}$ in time slot~$t$, the data traffic of MAR device~$n$ for key frame uploading in time slot~$t$ can be obtained from~\eqref{eq1}. 

The goal of our demand modeling function (DMF) is to approximate policy~$\Pi$ for predicting the data traffic of key frame uploading. A foundation for approximating~$\Pi$ is to define the states. Considering model accuracy and model complexity, we introduce two types of states for a \emph{detailed} and a \emph{simplified} data traffic modeling in the DMF, respectively. In addition to the approximation of policy~$\Pi$, the DMF should approximate the state transition probability function~$P(\mathbf{s}_{n,f+1}|\mathbf{s}_{n,f}, a_{n,f})$ to support data traffic modeling.

\subsubsection{Detailed Modeling}

In SLAM-based device pose tracking, the relations among the sets of FPs corresponding to camera frames are essential for estimating the device poses at the instants of capturing these camera frames, thereby significantly affecting key frame selection and uploading in MAR~\cite{khosoussi2019reliable}. 

In our detailed modeling, we use a graph to model the relations among the sets of FPs corresponding to camera frames and capture the impact of such a graph on the data traffic load of each MAR device~\cite{chen2023adaptslam}. 

Denote the set of FPs identified in camera frame~$f$ by~$\mathcal{U}_{f}$. We model the 3D map (device) of MAR device~$n$ and the camera frame~$f$, i.e.~$\mathcal{M}^\text{D}_{n, f} \cup \left\{ f \right\}$, as a weighted undirected graph, denoted by~$\mathcal{G}^\text{D}_{n,f} = (\mathcal{M}^\text{D}_{n, f} \cup \left\{ f \right\}, \mathcal{E}^\text{D}_{n,f})$, where~$\mathcal{E}^\text{D}_{n,f}$ represents the set of edges between every pair of camera frames in~$\mathcal{M}^\text{D}_{n, f} \cup \left\{ f \right\}$. For edge~$e = (f, f') \in \mathcal{E}^\text{D}_{n,f}$ connecting camera frames~$f, f' \in \mathcal{M}^\text{D}_{n,f}$, its weight is defined as the Jaccard coefficient, denoted by~$\varepsilon_{f, f'} \in [0,1]$, which quantifies the similarity of sets~$\mathcal{U}_{f}$ and~$\mathcal{U}_{f'}$ as follows:
    \begin{equation}\label{}
        \varepsilon_{f, f'} = \frac{|\mathcal{U}_{f} \cap \mathcal{U}_{f^{'}}|}{|\mathcal{U}_{f} \cup \mathcal{U}_{f^{'}}|}, \,\,\,\, \forall\,\, \mathcal{U}_{f} \cup \mathcal{U}_{f^{'}} \neq \emptyset,
    \end{equation} 
where~$|\cdot|$ represents the cardinality of a set. When many observations are shared by camera frames~$f$ and $f'$, the two sets of FPs~$\mathcal{U}_{f}$ and $\mathcal{U}_{f'}$ are similar, resulting in a large~$\varepsilon_{f, f'}$. Similarly, we model the set of camera frames~$\mathcal{X}_{n,f}$ defined right after~\eqref{eq4} as another weighted undirected graph, denoted by~$\mathcal{G}_{n,f} = (\mathcal{X}_{n,f}, \mathcal{E}_{n,f})$. Based on these graphs, we define the state used in the detailed modeling for MAR device~$n$ as follows:
    \begin{equation}\label{}
        \mathbf{s}^\text{D}_{n,f} = [\mathcal{G}^\text{D}_{n,f}, \mathcal{G}_{n,f}], \,\, \forall n \in \mathcal{N}, f \in \mathcal{F}_{n},
    \end{equation} 
which is abbreviated as the \emph{D-state}.

We adopt a graph convolutional network (GCN)-based technique to approximate policy~$\Pi$ by using the D-state~$\boldsymbol{\vartheta}^\text{D}$ as the input and action~$a_{n,f}$ as the output. Let~$\boldsymbol{\vartheta}^\text{D}$ denote the parameters of the GCN, and define~$\Pi^\text{D}$ as the policy generating action by using this GCN. To make policy~$\Pi^\text{D}$ approach policy~$\Pi$, we optimize the parameters~$\boldsymbol{\vartheta}^\text{D}$ by minimizing the following loss function:
    \begin{equation}\label{}
        L(\boldsymbol{\vartheta}^\text{D}) = \frac{1}{|\Xi|} \sum_{(a_{n,f},\mathbf{s}^\text{D}_{n,f}) \in \Xi} \left(a_{n,f} - \Pi^\text{D}(\mathbf{s}^\text{D}_{n,f};{\boldsymbol{\vartheta}^\text{D} }) \right)^{2},
    \end{equation}
where~$\Xi$ represents a set containing historical information on actions and D-states, stored in the MUP for MAR device~$n$. 

For graphs~$\mathcal{G}^\text{D}_{n,f}$ and~$\mathcal{G}_{n,f}$, we use two GCNs for feature extraction. The extracted features are concatenated, followed by fully-connected layers. The computational complexity for predicting whether a camera frame will be uploaded using the D-model is given by~\cite{kipf2016semi,10771784}:
    \begin{equation}\label{}
        O(Z^\text{D}) = O(|\mathcal{E}_{n,f}|N^\text{D}_{\text{I},1} N^\text{D}_{\text{H},1} N^\text{D}_{\text{O},1} + |\mathcal{E}^\text{D}_{n,f}| N^\text{D}_{\text{I},2} N^\text{D}_{\text{H},2} N^\text{D}_{\text{O},2} + N^\text{D}_{\text{FL}}), 
    \end{equation} 
where $N^\text{D}_{\text{I},1}$ and $N^\text{D}_{\text{I},2}$ denote the input feature dimensions of the nodes, $N^\text{D}_{\text{H},1}$ and $N^\text{D}_{\text{H},2}$ denote the hidden layer dimensions of the two GCNs, $N^\text{D}_{\text{O},1}$ and $N^\text{D}_{\text{O},2}$ denote the dimensions of features extracted by the GCNs, and~$N^\text{D}_{\text{FL}}$ represents the number of parameters in the fully-connected layers. Given~$F$ camera frames to be uploaded within the subsequent time slot, the computational complexity becomes~$O(Z^\text{D} F)$.

\subsubsection{D-State Transition Modeling}

In the detailed modeling, the state transition probability in D-state~$\mathbf{s}^\text{D}_{n,f}$ is characterized as follows:
    \begin{subequations}\label{eq8}
        \begin{align}
            & P(\mathbf{s}^\text{D}_{n,f+1}| \mathbf{s}^\text{D}_{n,f}, a_{n,f}) =  \frac{P(\mathcal{G}^\text{D}_{n,f+1}, \mathcal{G}_{n,f+1}, \mathcal{G}^\text{D}_{n,f}, \mathcal{G}_{n,f}, a_{n,f})}{P(\mathcal{G}^\text{D}_{n,f}, \mathcal{G}_{n,f}, a_{n,f})}\\
            & = \frac{P(\mathcal{G}^\text{D}_{n,f+1}, \mathcal{G}^\text{D}_{n,f}, a_{n,f}) P(\mathcal{G}_{n,f+1},\mathcal{G}_{n,f}|\mathcal{G}^\text{D}_{n,f+1}, \mathcal{G}^\text{D}_{n,f}, a_{n,f})}{P(\mathcal{G}^\text{D}_{n,f+1}, a_{n,f}) P(\mathcal{G}_{n,f} | \mathcal{G}^\text{D}_{n,f+1}, a_{n,f})}\\
            & = P(\mathcal{G}^\text{D}_{n,f+1}| \mathcal{G}^\text{D}_{n,f}, a_{n,f}) \frac{P(\mathcal{G}_{n,f+1},\mathcal{G}_{n,f})}{P(\mathcal{G}_{n,f})}\\
            & = P(\mathcal{G}^\text{D}_{n,f+1}| \mathcal{G}^\text{D}_{n,f}, a_{n,f}) P(\mathcal{G}_{n,f+1}| \mathcal{G}_{n,f}),
        \end{align}
    \end{subequations}
where~(\ref{eq8}c) holds due to the fact that newly arrived camera frames in~$\mathcal{X}_{n,f+1}$ in graph~$\mathcal{G}_{n,f+1}$ do not depend on graph of the 3D map (device), i.e.,~$\mathcal{G}^\text{D}_{n,f}$, or action~$a_{n,f}$. Since~$P(\mathcal{G}^\text{D}_{n,f+1}| \mathcal{G}^\text{D}_{n,f}, a_{n,f})$ can be calculated according to \eqref{eq41}, we need to approximate only~$P(\mathcal{G}_{n,f+1}| \mathcal{G}_{n,f})$ for modeling D-state transitions. We build a GCN ~$\phi(\mathcal{G}_{n,f};\boldsymbol{\theta})$ with parameters~$\boldsymbol{\theta}$ to approximate the D-state transitions. Note that this GCN is used only for graph link prediction, i.e.,~output only the weights of edges between camera frames, instead of the whole camera frames~\cite{zhang2019graph}.  

\subsubsection{Simplified Modeling}

The detailed modeling involves shared observations, e.g., the shared set of FPs, among historical camera frames. Excessive input data may introduce redundancy and decrease service demand modeling accuracy. For example, the procedure of key frame selection and uploading in the MAR operational mechanism is simple for device pose tracking when the variation in device pose is insignificant~\cite{campos2021orb,ben2022edge}. The accuracy of detailed modeling, thus, may be degraded since such redundant input data may complicate the GCN training for approximating policy~$\Pi$ in the detailed modeling.

As an alternative, we propose a simplified data-driven data traffic modeling and define the corresponding state used in the simplified modeling for MAR device~$n$, abbreviated as \emph{S-state}, as follows:
    \begin{equation}\label{}
        \mathbf{s}^\text{S}_{n,f} = [a_{n,i}]_{\forall f-T^\text{w} \leq i < f}, \,\, \forall n \in \mathcal{N}, f \in \mathcal{F}_{n},
    \end{equation} 
which consists of the actions conducted for the preceding~$T^\text{w}$ camera frames. We build a recurrent neural network with parameters~$\boldsymbol{\vartheta}^\text{S}$ to approximate policy~$\Pi$ by using the S-state~$\mathbf{s}^\text{S}_{n,f}$ as the input. In this case, the approximation of policy~$\Pi$ can be simplified as a temporal sequence prediction~\cite{hochreiter1997long}. We optimize parameters~$\boldsymbol{\vartheta}^\text{S}$ by minimizing the following loss function:
    \begin{equation}\label{}
        L(\boldsymbol{\vartheta}^\text{S}) = \frac{1}{|\Xi|} \sum_{(a_{n,f},\mathbf{s}^\text{S}_{n,f}) \in \Xi} \left(a_{n,f} - \Pi^\text{S}(\mathbf{s}^\text{S}_{n,f};{\boldsymbol{\vartheta}^\text{S}}) \right)^{2}.
    \end{equation}
Since S-state~$\mathbf{s}^\text{S}_{n,f}$ consists of only previous actions, S-state transitions are straightforward and do not require additional modeling.

Let~$N^\text{S}$ represent the number of cells of the adopted recurrent neural networks, e.g., LSTM~\cite{sak2014long}. Given that the input size and the output size of the recurrent neural network are~$T^\text{w}$ and~$1$, respectively, the computational complexity for predicting whether a camera frame will be uploaded using the S-model is $O(Z^\text{S})$, where~$Z^\text{S}$ is the total number of parameters in a standard recurrent neural network with one cell in each memory block and is given by~\cite{sak2014long}:
    \begin{equation}\label{}
        Z^\text{S} = 4 \times N^\text{S} \times N^\text{S} + 4 \times N^\text{S} \times (T^\text{w} + 1).
    \end{equation} 
Given~$F$ camera frames to be uploaded within the subsequent time slot, the computational complexity becomes~$O(Z^\text{S}F)$.

\subsection{Model Switching Function (MSF)}

    \begin{algorithm}[t] 
        \caption{Model Switching Method}\label{alg1}
        \LinesNumbered
        \textbf{Input:} $M_{n}$, $\delta^\text{u}_{n}$, $\delta^\text{b}_{n}$;\\
        \textbf{Initialization:} $h_{n,1} = 1, m_{n,1} = 0$;\\
        \For{$t \in \mathcal{T}$}
        {       
            Calculate~$\Delta_{n,t}$ by \eqref{eq9};\\
            \eIf{$\Delta_{n,t} > \delta^\text{u}_{n}$}
                {
                    $h_{n,t} \leftarrow 1$; $m_{n,t} \leftarrow 0$;\\
                }
                {  
                    \If{$\Delta_{n,t} < \delta^\text{b}_{n}$}
                    {
                        $m_{n,t} \leftarrow m_{n,t-1}+1$;\\
                        \If{$m_{n,t} \ge M_{n}$}
                        {
                           $h_{n,t} \leftarrow 0$; $m_{n,t} \leftarrow 0$;\\   
                        } 

                    }

                }
        }
        \textbf{Output:} $h_{n,t}$

    \end{algorithm} 

The model switching function (MSF) is designed to adapt the two data-driven models in the DMF to non-stationary uplink data traffic. We propose an easy-to-use model switching method stemmed from our observations from experiments conducted with an open-source SLAM-based device pose tracking platform~\cite{campos2021orb,ben2022edge} on real camera frame sequences~\cite{li2018interiornet}.

In MAR applications, when variations in the physical environment and device pose are insignificant, the procedure of key frame selection and uploading in the MAR operational mechanism is simple, leading to relatively stable key frame uploading; Conversely, a significant variation, e.g., tracking loss, generally complicates the procedure of key frame selection and uploading, potentially resulting in bursts of key frame uploading. We define the variation in the number of uploaded key frames across time slots, denoted by~$\Delta_{n,t}$, to guide the model switching for MAR device~$n$:
    \begin{equation}\label{eq9}
      \Delta_{n,t} =   \mid \tilde{k}_{n,t-1} - \tilde{k}_{n,t-2} \mid, \,\, \forall t \in \mathcal{T}, n \in \mathcal{N}.
    \end{equation}
We refer to~$\Delta_{n,t}$ as the \emph{model switching status} of MAR device~$n$. Define~$h_{n,t} \in \left\{0, 1\right\}$ as an indicator representing the resulting choice of our MSF for MAR device~$n$, i.e.,~its currently used modeling method. If~$h_{n,t} = 1$, the detailed modeling is used in time slot~$t$; Otherwise, the simplified modeling is used. 

The basic idea of model switching is as follows. A larger value of~$\Delta_{n,t}$ indicates a significant variation in device pose, and the detailed modeling is needed to capture the burst of key frame uploading in this case. Otherwise, the simplified modeling is sufficient. We present the model switching method in Algorithm~\ref{alg1}. We introduce parameters~$\delta^\text{u}_{n}$,~$\delta^\text{b}_{n}$, and $M_{n}$ to jointly configure the switching condition for MAR device~$n$, which avoid frequent switching between the two modeling methods. In lines~5-7, parameter~$\delta^\text{u}_{n}$ determines the threshold indicating that a significant variation occurs, and controls the switching to the detailed modeling, i.e.,~$h_{n,t} = 1$, when the value of~$\Delta_{n,t}$ exceeds the threshold. In lines 8-14, parameters~$\delta^\text{b}_{n}$ and~$M_{n}$ jointly control the switching to the simplified modeling, wherein $\delta^\text{b}_{n}$ represents the threshold for an insignificant variation. We introduce parameter~$M_{n}$ to adjust the trigger point for switching from the detailed modeling to the simplified modeling. A larger value of~$M_{n}$ represents a preference for using detailed modeling in the DMF. The three parameters offer flexibility for user-centric configuration, as they can be configured according to user movements and user-specific psychical environment.

\subsection{MAR User Profile (MUP)}\label{sec43}

    \begin{figure}[t]
        \centering
        \includegraphics[width=0.49\textwidth]{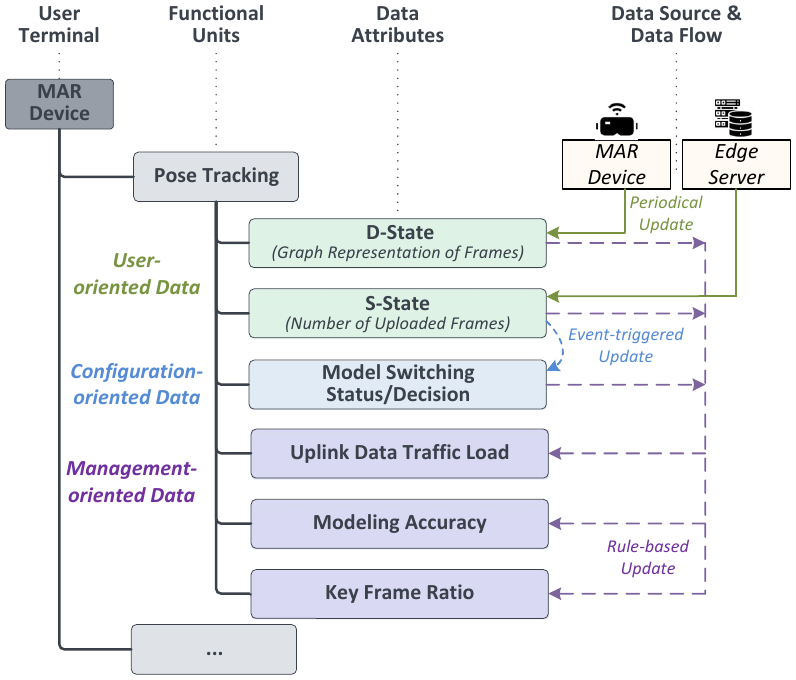}
        \caption{The hierarchical data model in the MUP.}\label{data_model}
    \end{figure}

The MUP includes a structured \emph{data model} consisting of a number of data elements, organized according to a defined schema~\cite{zhou2024user_wcm,ma2023nomore}. As shown in Fig.~\ref{data_model}, our defined hierarchical data model is tailored to MAR, which can support user-centric communication service provision. For our data model, we introduce data structure, data attribute, and data update method as follows:

    \begin{itemize}
        \item \textbf{Data Structure}: As shown in Fig.~\ref{data_model}, at the top level of this hierarchy, there is a ``user terminal'' representing an individual MAR device such as a pair of smart glasses or a smartphone. We define a number of ``functional units'' belonging to the MAR device, each relating to a unique functionality, e.g., device pose tracking or 3D rendering, in the MAR application. For each functional unit, e.g., device pose tracking, a set of ``data attributes'' can be flexibly defined according to its corresponding operation mechanism, thereby enabling MAR user-specific service demand modeling from a data-oriented perspective~\cite{zhou2024user_wcm}.

        \item \textbf{Data Attribute}: We classify the defined data attributes into three categories. (i) \emph{User-oriented data} support the characterization of service demand for the corresponding MAR device. Illustrated as the green blocks in Fig.~\ref{data_model}, key parameters~$\mathbf{s}^\text{D}_{n,f}$,~$\mathbf{s}^\text{S}_{n,f}$, and~$a_{n,f}$ are used by the DMF to model the uplink data traffic. (ii) \emph{Configuration-oriented data} support the effective configuration and execution of UDTOFs, e.g., DMF and MSF, thereby enabling the implementation of the DT itself. Model switching status~$\Delta_{n,t}$ and the choice of switching~$h_{n,t}$ are configured to determine the switching condition, shown as the blue block in Fig.~\ref{data_model}. (iii) \emph{Management-oriented data}, shown as the purple blocks in Fig.~\ref{data_model}, facilitate the implementation of user-centric network resource management strategies. In particular, parameters regarding demand modeling inaccuracies, such as~$p_{t}$, $q_{t}$, and $\lambda_{t}$, support the decision making on user-centric network resource allocation, which will be introduced in Section~\ref{sec5}.

        \item \textbf{Data Update}: The data elements in our data model can be updated. The data sources and data update methods can vary across data attributes. We design three different data update methods for the aforementioned three categories of data attributes. As shown in Fig.~\ref{data_model}, the user-oriented data are periodically collected from the MAR device and the edge server. The update of configuration-oriented data is triggered by the update events of user-oriented data. Specifically, the update of model switching status~$\Delta_{n,t}$ in Algorithm~\ref{alg1} is based on the temporal variation in~$a_{n,f}$, and the change of~$h_{n,t}$ is triggered by the value change of~$\Delta_{n,t}$. The update of the management-oriented data depends on an update rule tailored to the adopted network management algorithms, which will be presented in Section~\ref{sec5}.

    \end{itemize}

In contrast to conventional mathematical modeling for capturing service demand, our data model has two potential advantages. First, it has the flexibility to differentiate the data structure and the data attributes for different MAR devices, thereby supporting customized service demand modeling and network resource management. In addition, our data model has a high extensibility to support various functionalities in MAR applications and network management objectives.\footnote{This paper only focuses on the communication service provision for a single functional unit, i.e., device pose tracking.}  Due to these advantages, the implicit impact of the MAR operational mechanism on the user-specific service demands can be characterized by using the data model and UDTOFs~\cite{shen2021holistic,ma2025matterguard}.

\section{DT-based User-centric Communication Service Provision}\label{sec5}

It is well known that neither mathematical models nor data-driven models can guarantee accurate modeling of service demand~\cite{navarro2020survey,shen2021holistic}. Considering the potential modeling inaccuracies resulting from MARLIN, we need to design a robust communication service provision method tailored to data-driven traffic modeling.

Define~$\hat{a}_{n,f}$ as the predicted value of~$a_{n,f}$ by MARLIN. Given the prediction results~$\hat{\mathbf{a}}_{n,t} = [\hat{a}_{n, f}]_{\forall f \in \mathcal{F}_{n,t}} $, Problem~P1 can be transformed as follows:
    \begin{subequations}\label{p2}
        \begin{align}
            \textrm{P2:} &\,\, \min_{ \{ b_{n,t} \}_{n \in \mathcal{N}, t \in \mathcal{T}} } \sum_{n \in \mathcal{N} }{\sum_{ t \in \mathcal{T} }{ b_{n,t} }}\\
            \textrm{s.t.} &\,\, P( \alpha \tilde{k}_{n,t} \leq T^\text{r} r_{n,t} | \hat{\mathbf{a}}_{n,t} ) \ge \epsilon, \,\, \forall n \in \mathcal{N}, t \in \mathcal{T},
        \end{align}
    \end{subequations}
where constraint~(\ref{p2}b) bounds the minimum value of the conditional probability given the predicted results. We solve Problem~P2 to achieve robust user-centric communication service provision in both single-user and multi-user scenarios. 

\subsection{Single-user Communication Service Provision}\label{}

For the single-user scenario, i.e.,~$|\mathcal{N}| = 1$, in this subsection, we remove the subscript~$n$ for simplicity. To represent the amount of radio spectrum resource required for timely key frame uploading, we introduce an auxiliary variable~$k_{t}$, defined as the number of key frames if the reserved radio spectrum resource~$b_{t}$ can support timely key frame uploading. As a result, the optimal solution to Problem~P2, i.e.,~$b^{*}_{t}$, is given by:
    \begin{equation}\label{eq11}
      b^{*}_{t} = \frac{\alpha T^\text{r}}{\log (1 + \gamma_{t})}  k^{*}_{t},\,\, \forall t \in \mathcal{T},
    \end{equation}
where~$k^{*}_{t}$ denotes the minimum value of $k_{t}$, given by:
    \begin{equation}\label{eq12}
      k^{*}_{t} =  \arg \min_{k_{t}} P( \sum_{f \in \mathcal{F}_{t}}{a_{f} } \le k_{t} | \hat{\mathbf{a}}_{t} ) \ge \epsilon. 
    \end{equation}
To determine~$k^{*}_{t}$, we need to first obtain the conditional probability~$P(\mathbf{a}_{t} = \mathbf{a} | \hat{\mathbf{a}}_{t} = \hat{\mathbf{a}})$, i.e.,~the posterior probability of~$\mathbf{a}_{t} = \mathbf{a}, \forall \mathbf{a} \in \mathcal{A}$ given predicted results~$\hat{\mathbf{a}}_{t} = \hat{\mathbf{a}}, \forall \hat{\mathbf{a}} \in \mathcal{A}$ according to Bayes' theorem:
   \begin{equation}\label{eq13}
       P(\mathbf{a}_{t} = \mathbf{a} | \hat{\mathbf{a}}_{t} = \hat{\mathbf{a}}) = \frac{P(\mathbf{a}_{t} = \mathbf{a}) P(\hat{\mathbf{a}}_{t} = \hat{\mathbf{a}} | \mathbf{a}_{t} = \mathbf{a})}{\sum_{\mathbf{a} \in \mathcal{A}}{P(\mathbf{a}_{t} = \mathbf{a}) P(\hat{\mathbf{a}}_{t} = \hat{\mathbf{a}} | \mathbf{a}_{t} = \mathbf{a})} },
    \end{equation}
where~$P(\mathbf{a}_{t} = \mathbf{a})$ is the prior probability of~$\mathbf{a}_{t}$, and~$P(\hat{\mathbf{a}}_{t} = \hat{\mathbf{a}} | \mathbf{a}_{t} = \mathbf{a})$ is the likelihood of~$\mathbf{a}_{t}$ given prediction result~$\hat{\mathbf{a}}_{t}$. By updating the value of~$P(\mathbf{a}_{t} = \mathbf{a} | \hat{\mathbf{a}}_{t} = \hat{\mathbf{a}})$ based on the update of parameters~$P(\mathbf{a}_{t} = \mathbf{a})$ and~$P(\hat{\mathbf{a}}_{t} = \hat{\mathbf{a}} | \mathbf{a}_{t} = \mathbf{a})$, we can determine the value of~$k^{*}_{t}$. 

    \begin{algorithm}[t] 
        \caption{DT-based User-centric Communication Service Provision}\label{alg2}
        \LinesNumbered
        \textbf{Input:} $\tau$;\\
        \textbf{Initialization:} $P^\text{D}(\mathbf{a}_{t} = \mathbf{a}), P^\text{S}(\mathbf{a}_{t} = \mathbf{a}), P^\text{D}(\hat{\mathbf{a}}_{t} = \hat{\mathbf{a}} | \mathbf{a}_{t} = \mathbf{a}), P^\text{S}(\hat{\mathbf{a}}_{t} = \hat{\mathbf{a}} | \mathbf{a}_{t} = \mathbf{a}),\,\, \forall \mathbf{a} \in \mathcal{A}$;\\
        \For{$t \in \mathcal{T}$}
        {   
            
            The MSF obtains~$h_{t}$ by using Algorithm~\ref{alg1} and selects a method from the detailed and simplified modeling methods based on~$h_{t}$;\\

            The DMF obtains prediction result~$\hat{\mathbf{a}}_{t} = \hat{\mathbf{a}}$;\\

            \eIf{$h_{t} = 1$}
                {   
                    The MUP provides~$P^\text{D}(\mathbf{a}_{t} = \mathbf{a} | \hat{\mathbf{a}}_{t})$ by~\eqref{eq13};\\

                    The controller calculates $k^{*}_{t}$ by~\eqref{eq12} based on~$P^\text{D}(\mathbf{a}_{t} = \mathbf{a} | \hat{\mathbf{a}}_{t})$;\\ 
                }
                {   
                    The MUP provides~$P^\text{S}(\mathbf{a}_{t} = \mathbf{a} | \hat{\mathbf{a}}_{t})$ by~\eqref{eq13};\\

                    The controller calculates $k^{*}_{t}$ by~\eqref{eq12} based on~$P^\text{S}(\mathbf{a}_{t} = \mathbf{a} | \hat{\mathbf{a}}_{t})$;\\ 
                    
                }

            $\mathbf{s}^\text{D}_{f}, a_{f}\,\, \forall f \in \mathcal{F}_{t}$ are used to update the user-oriented data in the MUP;\\ 
            \eIf{$h_{t} = 1$}
                {   
                    $P^\text{D}(\hat{\mathbf{a}}_{t+1} = \hat{\mathbf{a}} | \mathbf{a}_{t+1} = \mathbf{a})$, $P^\text{D}(\mathbf{a}_{t+1} = \mathbf{a})$ $\leftarrow$ Estimate based on the user-oriented data regarding the detailed modeling during past~$\tau$ time slots;\\
                }
                {   
                    $P^\text{S}(\hat{\mathbf{a}}_{t+1} = \hat{\mathbf{a}} | \mathbf{a}_{t+1} = \mathbf{a})$, $P^\text{S}(\mathbf{a}_{t+1} = \mathbf{a})$ $\leftarrow$ Estimate based on the user-oriented data regarding the simplified modeling during past~$\tau$ time slots;\\
                    
                }

        }
        \textbf{Output:} $k^{*}_{t}$

    \end{algorithm} 

The workflow of DT-based user-centric communication service provision for the MAR device is summarized in Algorithm~\ref{alg2}. In lines~4-5, at the beginning of time slot~$t$, the MSF selects a proper modeling method, and the DMF provides the predicted results of the service demand of the MAR device in time slot~$t$. In lines~6-12, parameters~$P(\mathbf{a}_{t} = \mathbf{a})$ and~$P(\hat{\mathbf{a}}_{t} = \hat{\mathbf{a}} | \mathbf{a}_{t} = \mathbf{a})$ stored in the MUP are used to calculate~$P(\mathbf{a}_{t} = \mathbf{a} | \hat{\mathbf{a}}_{t} = \hat{\mathbf{a}})$, which can help the controller to determine the value of~$k^{*}_{t}$. To indicate the parameters for the detailed and the simplified modeling, we use superscripts ``$\text{D}$'' and ``$\text{S}$'', respectively. In line~13, information on D-states~$\mathbf{s}^\text{D}_{f}$ and actions~$a_{f}$, collected from the MAR device and the edge server, respectively, is used to update the user-oriented data in the MUP at the end of time slot~$t$. In lines~14-17, parameters~$P(\mathbf{a}_{t} = \mathbf{a})$ and~$P(\hat{\mathbf{a}}_{t} = \hat{\mathbf{a}} | \mathbf{a}_{t} = \mathbf{a})$ are updated based on the user-oriented data collected in the past~$\tau$ time slots. Parameters~$P(\mathbf{a}_{t} = \mathbf{a})$ and~$P(\hat{\mathbf{a}}_{t} = \hat{\mathbf{a}} | \mathbf{a}_{t} = \mathbf{a})$ in Algorithm~\ref{alg2} represent the management-oriented data defined in Subsection~\ref{sec43}, which enhance the robustness of decision making for user-centric communication service provision by accounting for modeling inaccuracies.

Considering the complexity in updating a large number of parameters~$P(\mathbf{a}_{t} = \mathbf{a})$ and~$P(\hat{\mathbf{a}}_{t} = \hat{\mathbf{a}} | \mathbf{a}_{t} = \mathbf{a})$ in the MUP, we analyze the performance of user-centric communication service provision and decrease the number of the used parameters for easily applying Algorithm~\ref{alg2}. Without loss of generality, we assume that, given~$\hat{a}_{f}, \forall f \in \mathcal{F}_{t}$, random variables~$a_{f}, \forall f \in \mathcal{F}_{t}$ are independent and identically distributed (i.i.d.), and~$P(a_{f} | \hat{\mathbf{a}}_{t}) = P(a_{f} | \hat{a}_{f}), \forall f \in \mathcal{F}_{t}$. 

    \begin{figure*}[t] 
        \begin{equation}\label{eq30}
            \begin{split}
                & g(k_{t};p_{t}, q_{t}, \lambda_{t}) = \sum_{k=0}^{k_{t}}{ \sum_{j = \max(0, k-(F_{t}-\hat{A}_{t}))}^{\min(\hat{A}_{t},k)}{\binom{\hat{A}_{t}}{j} (p_{t}^\text{TPR})^{j}(1-p_{t}^\text{TPR})^{\hat{A}_{t}-j} \binom{F_{t}-\hat{A}_{t}}{k-j} (1-p_{t}^\text{TNR})^{k-j}(p_{t}^\text{TNR})^{F_{t}-\hat{A}_{t}-k+j} }}.
            \end{split}
        \end{equation}  
        \rule[-10pt]{18.15cm}{0.05em} 
    \end{figure*}

    \begin{theorem}\label{theorem1}
        The probability~$P(\sum_{f \in \mathcal{F}_{t}}{a_{f}} \leq k_{t} | \hat{\mathbf{a}}_{t} )$ can be derived in~\eqref{eq30}, which is a non-decreasing function given parameters~$p_{t}$, $q_{t}$, and $\lambda_{t}$, where $\hat{A}_{t} = \sum_{f \in \mathcal{F}_{t}}{\hat{a}_{f}}$, $F_{t} = |\mathcal{F}_{t}|$,
            \begin{equation}\label{}
               p_{t}^\text{TPR} = \frac{p_{t}\lambda_{t}}{p_{t}\lambda_{t} +(1-q_{t})(1-\lambda_{t})},
            \end{equation}
        and
            \begin{equation}\label{}
               p_{t}^\text{TNR} = \frac{q_{t}(1-\lambda_{t})}{q_{t}(1-\lambda_{t}) + (1-p_{t})\lambda_{t}}.
            \end{equation}
    \end{theorem}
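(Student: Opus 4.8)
The plan is to recognize $\sum_{f \in \mathcal{F}_t} a_f$, conditioned on the prediction vector $\hat{\mathbf{a}}_t$, as the sum of two independent binomial random variables and then to obtain \eqref{eq30} as the convolution of their probability mass functions. First I would partition $\mathcal{F}_t$ into the $\hat{A}_t = \sum_{f \in \mathcal{F}_t}\hat{a}_f$ frames predicted to be key frames and the remaining $F_t - \hat{A}_t$ frames predicted to be non-key frames. By the standing assumption that the $a_f$ are conditionally i.i.d.\ given $\hat{\mathbf{a}}_t$ with $P(a_f \mid \hat{\mathbf{a}}_t) = P(a_f \mid \hat{a}_f)$, each $a_f$ is an independent Bernoulli variable whose success probability is $P(a_f = 1 \mid \hat{a}_f = 1)$ on the first group and $P(a_f = 1 \mid \hat{a}_f = 0)$ on the second group, so only the counts of each group matter.

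Next I would identify these two Bernoulli parameters with $p_t^\text{TPR}$ and $1 - p_t^\text{TNR}$ via Bayes' theorem, taking $\lambda_t = P(a_f = 1)$ as the prior probability that a frame is a key frame, $p_t = P(\hat{a}_f = 1 \mid a_f = 1)$ as the recall, and $q_t = P(\hat{a}_f = 0 \mid a_f = 0)$ as the specificity of the predictor. Expanding the denominators via total probability, $P(\hat{a}_f = 1) = p_t \lambda_t + (1-q_t)(1-\lambda_t)$ and $P(\hat{a}_f = 0) = q_t(1-\lambda_t) + (1-p_t)\lambda_t$, reproduces exactly the stated closed forms for $p_t^\text{TPR}$ and $p_t^\text{TNR}$. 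Consequently, letting $J$ denote the number of truly-positive frames among the first group and $L$ the number of truly-positive frames among the second group, we have $J \sim \mathrm{Binomial}(\hat{A}_t, p_t^\text{TPR})$ and $L \sim \mathrm{Binomial}(F_t - \hat{A}_t, 1 - p_t^\text{TNR})$, with $J$ and $L$ independent and $\sum_{f \in \mathcal{F}_t} a_f = J + L$.

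Then I would write $P(\sum_{f \in \mathcal{F}_t} a_f \le k_t \mid \hat{\mathbf{a}}_t) = \sum_{k=0}^{k_t} P(J + L = k)$ and expand each $P(J+L=k) = \sum_j P(J=j)\,P(L=k-j)$ as a discrete convolution, the inner index $j$ ranging over $\max(0, k-(F_t-\hat{A}_t)) \le j \le \min(\hat{A}_t, k)$ so that both binomial coefficients are well defined; substituting the two binomial PMFs yields \eqref{eq30}. For the monotonicity claim I would simply note that every summand of \eqref{eq30} is a product of binomial probabilities and hence non-negative, so passing from $k_t$ to $k_t+1$ merely appends the non-negative block of terms indexed by $k = k_t+1$; therefore $g(k_t; p_t, q_t, \lambda_t)$ is non-decreasing in $k_t$, which is exactly what makes the threshold $k^*_t$ in \eqref{eq12} well defined and computable by a line search.

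I do not anticipate a substantive obstacle here: the argument is essentially bookkeeping built on the stated independence assumption. The only places needing care are (i) justifying that conditioning on the full vector $\hat{\mathbf{a}}_t$ collapses to conditioning on the per-frame predictions and thus to the single count $\hat{A}_t$, and (ii) matching the summation limits of the convolution with those written in \eqref{eq30}. I would also double-check the two Bayes inversions, since it is easy to swap the roles of recall and specificity, or to confuse $p_t^\text{TNR}$ with its complement $1 - p_t^\text{TNR}$ when assigning the success probability on the predicted-negative group.
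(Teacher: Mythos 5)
Your proposal follows essentially the same route as the paper's proof: Bayes inversion to get $p_t^\text{TPR}$ and $p_t^\text{TNR}$, the conditional i.i.d.\ assumption to split $\sum_f a_f$ into two independent binomials $\mathrm{B}(\hat{A}_t, p_t^\text{TPR})$ and $\mathrm{B}(F_t-\hat{A}_t, 1-p_t^\text{TNR})$, and the convolution sum giving \eqref{eq30}. It is correct, and your explicit remark that monotonicity in $k_t$ follows from appending non-negative terms even makes that part slightly more complete than the paper's proof, which leaves it implicit.
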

    \begin{proof}   
        See Appendix~A.
    \end{proof}

Theorem~\ref{theorem1} allows us to find the optimal value, i.e.,~$k^{*}_{t}$, based on the closed-form probability~$P(\sum_{f \in \mathcal{F}_{t}}{a_{f}} \leq k_{t} | \hat{\mathbf{a}}_{t} )$, given only three parameters~$p_{t} = P(\hat{a}_{f}=1 | a_{f}=1)$,~$q_{t} = P(\hat{a}_{f}=0 | a_{f}=0)$, and~$\lambda_{t} = P(a_{f}=1)$ in time slot~$t$. Thus, parameters~$p_{t}$, $q_{t}$, and $\lambda_{t}$ can replace parameters~$P(\mathbf{a}_{t} = \mathbf{a})$ and~$P(\hat{\mathbf{a}}_{t} = \hat{\mathbf{a}} | \mathbf{a}_{t} = \mathbf{a})$ in Algorithm~\ref{alg2} as management-oriented data maintained by the MUP. By updating the three parameters, MARLIN supports robust user-centric communication service provision for the MAR device.

\subsection{Multi-user Communication Service Provision}\label{}

The proposed Algorithm~\ref{alg2} can be extended to multi-user scenarios, i.e.,~$|\mathcal{N}|> 1$, since the communication service provision for different MAR devices in Problem~P2 is independent except for a potential constraint on the total available radio spectrum resource for concurrent uplink transmissions at the BS. In this subsection, we focus on the comparison of user-centric communication service provision and slicing-based communication service provision.

Given the radio spectrum resource reserved for MAR device~$n$ in time slot~$t$, i.e.,~$b^{*}_{n,t}$, in~\eqref{eq11}, the overall radio spectrum reserved for all MAR devices in time slot~$t$ is given by: 
    \begin{equation}\label{eq17}
      B^{*}_{t} = \sum_{n \in \mathcal{N}}{ \frac{\alpha T^\text{r}}{\log (1 + \gamma_{n,t})}  k^{*}_{n,t} },\,\, \forall t \in \mathcal{T}.
    \end{equation}
In user-centric communication service provision, the reserved radio spectrum~$B^{*}_{t}$ can ensure that the probability of timely uploading key frame for each MAR device is larger than the value of~$\epsilon$.

Since slicing-based communication service provision does not have or use the service demand information of individual MAR devices~\cite{shen2021holistic}, the service demands of different devices running the same MAR applications are usually assumed to be i.i.d.~\cite{ye2018dynamic}. Considering a simple scenario where random variables~$\tilde{k}_{n,t}, \forall n  \in \mathcal{N}$ are i.i.d. with mean~$\bar{k}_{t}$ and variance~$\sigma^{2}_{t}$, we can formulate a slicing-based communication service provision problem as follows~\cite{li2022risk,li2020data}:
    \begin{subequations}\label{p3} 
        \begin{align}
            \textrm{P3:} &\,\, \min_{ \{ B^\text{s}_{t} \}_{t \in \mathcal{T}} } \sum_{ t \in \mathcal{T} }{ B^\text{s}_{t} }\\
            \textrm{s.t.} &\,\, P \left( \frac{\alpha}{|\mathcal{N}|} \sum_{n \in \mathcal{N}}{\tilde{k}_{n,t}} \leq T^\text{r} r\left( \frac{B^\text{s}_{t}}{|\mathcal{N}|} \right) \right) \ge \epsilon, \,\, \forall t \in \mathcal{T},
        \end{align}
    \end{subequations}
where~$B^\text{s}_{t}$ represents the radio spectrum resource reserved for all MAR devices, and~(\ref{p3}b) ensures that the probability of timely uploading the number of key frames averaged over all MAR devices is larger than the value of~$\epsilon$. 

    \begin{theorem}\label{theorem2}
        The optimal value of~$B^\text{s}_{t}$ in Problem~P3 is given by:
            \begin{equation}\label{eq19}
               B^{s,*}_{t}= \frac{\alpha T^\text{r} |\mathcal{N}|}{\log (1 + \bar{\gamma}_{t})} \lceil \bar{k}_{t} + \frac{\Phi^{-1}(\epsilon) \sigma^{2}_{t}}{|\mathcal{N}|} \rceil,
            \end{equation}
        where $\Phi^{-1}(\cdot)$ denotes the quantile function of the standard normal distribution, $\bar{\gamma}_{t}$ represents the average SNR over all MAR devices, and~$ \lceil \cdot \rceil$ is the ceiling function. 
    \end{theorem}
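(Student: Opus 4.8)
The plan is to exploit the separable structure of Problem~P3. Both the objective $\sum_{t\in\mathcal{T}}B^\text{s}_{t}$ and the constraints~(\ref{p3}b) decouple across time slots, so it suffices to solve, for each fixed $t$, the one-dimensional problem of minimizing $B^\text{s}_{t}$ subject to its single chance constraint. First I would establish that the left-hand side of~(\ref{p3}b), viewed as a function of $B^\text{s}_{t}$, is non-decreasing: since the rate map $r(\cdot)$ is increasing, raising $B^\text{s}_{t}$ enlarges the event $\{\frac{\alpha}{|\mathcal{N}|}\sum_{n\in\mathcal{N}}\tilde{k}_{n,t}\le T^\text{r}r(B^\text{s}_{t}/|\mathcal{N}|)\}$ pointwise in the sample space, hence its probability only grows. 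Therefore the optimal $B^{s,*}_{t}$ is the smallest value at which the constraint holds with equality, and the problem reduces to a quantile computation.

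Second, I would rewrite the chance constraint so that $B^\text{s}_{t}$ is isolated. Substituting $r(B^\text{s}_{t}/|\mathcal{N}|)=(B^\text{s}_{t}/|\mathcal{N}|)\log(1+\bar{\gamma}_{t})$ and cancelling the common factor~$\alpha/|\mathcal{N}|$, the constraint takes the form $P\big(\tfrac{1}{|\mathcal{N}|}\sum_{n\in\mathcal{N}}\tilde{k}_{n,t}\le c_{t}B^\text{s}_{t}\big)\ge\epsilon$ for an explicit positive constant $c_{t}$ depending only on $\alpha$, $T^\text{r}$, $\bar{\gamma}_{t}$, and $|\mathcal{N}|$, so that $c_{t}^{-1}=\alpha T^\text{r}|\mathcal{N}|/\log(1+\bar{\gamma}_{t})$ as in~\eqref{eq19}. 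Hence $B^{s,*}_{t}=c_{t}^{-1}\,\kappa^{*}_{t}$, where $\kappa^{*}_{t}$ is the smallest threshold for which $P(\bar{A}_{t}\le\kappa^{*}_{t})\ge\epsilon$, with $\bar{A}_{t}:=\tfrac{1}{|\mathcal{N}|}\sum_{n\in\mathcal{N}}\tilde{k}_{n,t}$.

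Third, I would evaluate $\kappa^{*}_{t}$ using the i.i.d.\ assumption on $\tilde{k}_{n,t}$ (mean $\bar{k}_{t}$, variance $\sigma^{2}_{t}$). By the central limit theorem, $\bar{A}_{t}$ is well approximated by a Gaussian centered at $\bar{k}_{t}$ with the appropriately scaled spread, so $P(\bar{A}_{t}\le\kappa)\approx\Phi\big((\kappa-\bar{k}_{t})/\omega_{t}\big)$ and the $\epsilon$-level requirement forces $\kappa\ge\bar{k}_{t}+\Phi^{-1}(\epsilon)\,\omega_{t}$ with $\omega_{t}$ the dispersion term $\sigma^{2}_{t}/|\mathcal{N}|$ appearing in~\eqref{eq19}. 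Because the reserved spectrum per device is provisioned for a whole number of key frames, $\kappa^{*}_{t}$ must be rounded up to the next integer, which produces the ceiling; substituting $\kappa^{*}_{t}=\lceil\bar{k}_{t}+\Phi^{-1}(\epsilon)\sigma^{2}_{t}/|\mathcal{N}|\rceil$ into $B^{s,*}_{t}=c_{t}^{-1}\kappa^{*}_{t}$ yields~\eqref{eq19}.

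The step I expect to be the crux is the Gaussian/CLT reduction in the third stage: it is exact only in the large-$|\mathcal{N}|$ regime, so the argument treats it as the working model for the aggregated demand, and care is needed both in identifying the correct dispersion term and in arguing that integrality of the per-device key-frame count is what forces the ceiling. The decoupling and the monotonicity-to-boundary argument in the first two stages are routine once the chance constraint is rewritten.
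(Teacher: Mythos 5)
Your proposal follows essentially the same route as the paper's proof: exploit per-slot separability and monotonicity of the chance constraint to reduce the problem to the minimal threshold $k^{\text{s},*}_{t}$ for the sample mean $\frac{1}{|\mathcal{N}|}\sum_{n\in\mathcal{N}}\tilde{k}_{n,t}$, invoke the central limit theorem under the i.i.d.\ assumption to approximate its CDF by a Gaussian with mean $\bar{k}_{t}$, take the ceiling for integrality, and multiply by the per-key-frame spectrum factor evaluated at the average SNR $\bar{\gamma}_{t}$. The dispersion term $\sigma^{2}_{t}/|\mathcal{N}|$ and the constant $\alpha T^\text{r}|\mathcal{N}|/\log(1+\bar{\gamma}_{t})$ that you adopt are exactly those used in the paper's own derivation, so the argument is correct in the same sense and to the same degree of rigor as the published proof.
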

    \begin{proof}   
        See Appendix~B.
    \end{proof}

Theorem~\ref{theorem2} allows us to obtain the overall radio spectrum reserved for MAR devices in time slot~$t$, i.e.,~$B^{s,*}_{t}$, in slicing-based communication service provision. Comparing~\eqref{eq17} and~\eqref{eq19}, we have two observations. First, the amount of radio spectrum resource reserved for an individual MAR device in the slicing-based communication service provision, i.e.,~$B^{s,*}_{t}/|\mathcal{N}|$, approaches~$\alpha T^\text{r} \bar{k}_{t} / \log (1 + \bar{\gamma}_{t})$, which is less than that in user-centric communication service provision when~$ \sum_{n \in \mathcal{N}}{k^{*}_{n,t}} < \bar{k}_{t} |\mathcal{N}|$. Second, slicing-based communication service provision is less sensitive to the variances of individual traffic demands of MAR devices. As shown in Theorem~\ref{theorem2}, the value of~$\Phi^{-1}(\epsilon) \sigma^{2}_{t}/|\mathcal{N}|$ decreases with the increase of~$|\mathcal{N}|$. In contrast, the distribution of~$\tilde{k}_{n,t}$ significantly influences user-centric communication service provision, and, thus, the user-centric communication service provision is more robust in adapting to user-specific service demands. Consequently, the user-centric communication service provision for MAR can trade off resource consumption for the ability in adapting to user-specific service demand variations.

\section{Performance Evaluation}\label{sec6}

In this section, we evaluate the performance of our DT-based, user-centric approach to communication service provision in edge-assisted MAR. Our focus is on the impact of radio spectrum resource management on the timeliness of key frame uploading from the communication and networking perspective.

\subsection{Simulation Settings}

\subsubsection{Settings of Considered Scenario and DT}
In our simulation, we use 218 camera frame sequences, corresponding to different user movements in various environments, from the InteriorNet dataset~\cite{li2018interiornet} and conduct device pose tracking for the MAR device using the open-source ORB-SLAM3 platform~\cite{campos2021orb}. We use a resource block (RB) as the base unit for radio spectrum resource, each of which is 180\,kHz wide (12 subcarriers) in radio spectrum bandwidth and 0.5\,ms in time. Other important parameter settings are listed in Table~I. 

To establish a DT for each MAR device, we first design the DMF in the DT using deep learning techniques. Specifically, for detailed modeling within the DMF, we employ a DNN with parameters~$\boldsymbol{\vartheta}^\text{D}$. This network consists of one graph convolution (GC) layer for graph feature extraction, followed by three fully connected (FC) layers with output dimensions of 416, 64, and 8, respectively. To approximate D-state transitions, we use another DNN with parameters~$\boldsymbol{\theta}$, comprising three GC layers followed by three FC layers with output dimensions of 256, 64, and 32, respectively. The input node features of graphs~$\mathcal{G}_{n,f}$ and~$\mathcal{G}^\text{D}_{n,f}$ are derived from camera frames processed by the ResNet-50 DNN. Each GC or FC layer is followed by a batch normalization layer and a ReLU activation function. For simplified modeling in the DMF, we adopt a DNN with parameters~$\boldsymbol{\vartheta}^\text{S}$, which includes one LSTM layer with 50 neurons, followed by two FC layers with output dimensions of 32 and 16, respectively.

Second, we design the MSF in the DT by concentrating on three key parameters that enable the adaptation of the DMF to the non-stationary uplink data traffic of each MAR device. We set the trigger point~$M_{n}$ and the lower variation threshold~$\delta^\text{b}_{n}$ to~3 and~2, respectively, to jointly determine the conditions for switching from detailed modeling to simplified modeling in the DMF. We set the upper variation threshold~$\delta^\text{u}_{n}$ to~4 to determine determine the condition for switching from simplified modeling back to detailed modeling.

\subsubsection{Benchmark Approaches}
We compare the proposed DT-based user-centric communication service provision with the state-of-the-art slicing-based communication service provision, a key innovation for service-oriented network management in 5G~\cite{ye2018dynamic}. In addition, we adopt the following prevalent service demand modeling approaches, including mathematical model-based and data-driven approaches, as benchmark~\cite{wang2020long,navarro2020survey,oreshkinn,zhou2021informer}:
    \begin{itemize}
        \item \emph{Poisson regression}: The number of key frames for uploading in each time slot is assumed to follow a Poisson distribution, a mathematical model widely applied in network resource management. The parameters of the Poisson distribution are estimated based on historical information.

        \item \emph{LSTM-based neural network}: Following the simplified modeling in the DMF, an LSTM-based recurrent neural network is pre-trained and employed to predict the number of key frames to be uploaded in each time slot.

        \item \emph{N-BEATS:} Without relying on recurrent neural networks or attention-based mechanisms, N-BEATS uses a stack of fully-connected blocks to capture long-term components for time-series prediction.

        \item \emph{Transformer-based neural network:} Transformer-based neural networks leverage self-attention mechanisms to enable parallel processing and capture direct relationships across all inputs, making them highly effective for time-series analysis and prediction.

        \item \emph{Informer:} Transformer-based neural networks designed for long sequence time-series forecasting, featuring a compute-efficient self-attention mechanism and a scalable architecture to handle long outputs.
  
    \end{itemize}

    \begin{table}[t]
        \normalsize
        \centering
        \captionsetup{justification=centering,singlelinecheck=false}
        \caption{Simulation Parameters}\label{table1}
        \begin{tabular}{c|c|c|c}
            \hline\hline
             Parameter & Value & Parameter & Value\\
             \hline\hline
             $F$ & 10\,frames & $T^\text{r}$ & 0.02\,second \\
             \hline
             $\alpha$ & 5\,Mbits & $\gamma_{n,t}$ & 15\,dB\\
             \hline
        \end{tabular}
    \end{table}

\subsection{Performance of Service Demand Modeling}

    \begin{figure}[t]
        \centering
        \includegraphics[width=0.40\textwidth]{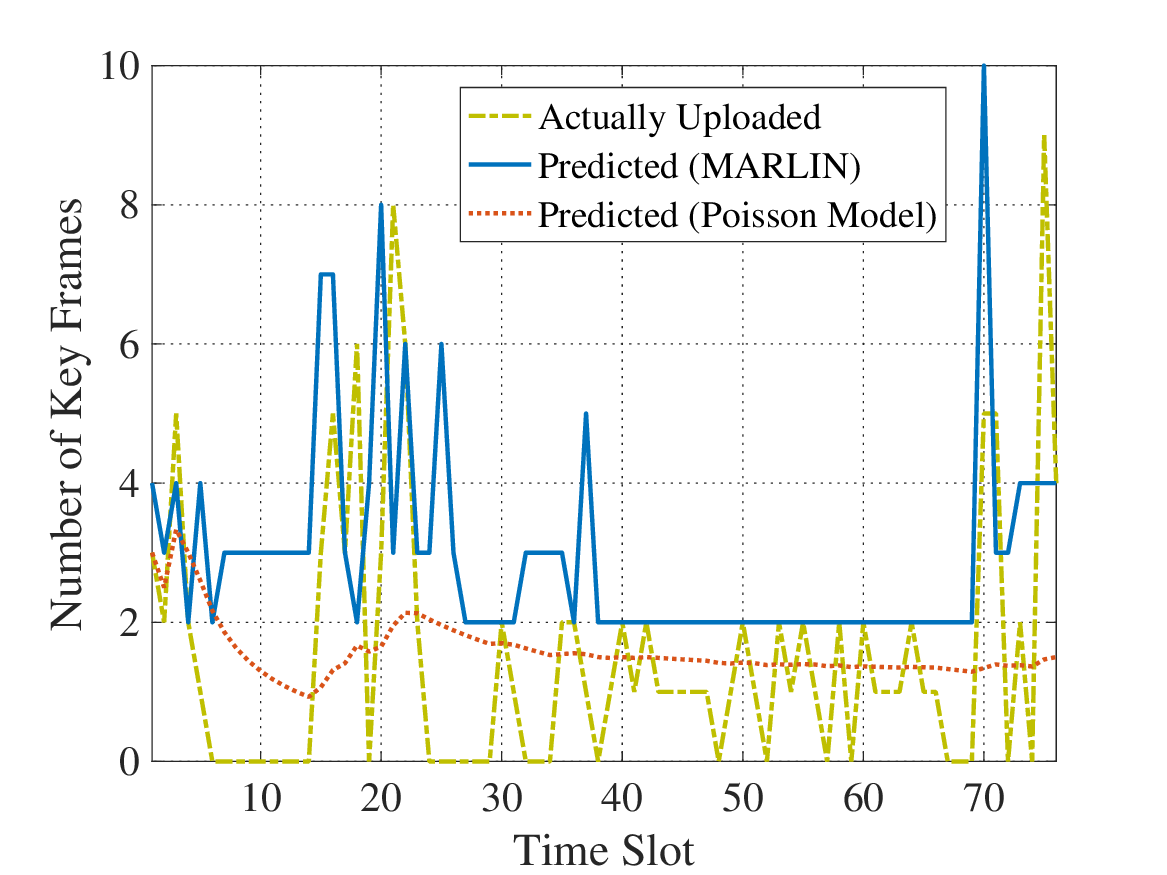}
        \caption{Performance Comparison between MARLIN and Poisson model-based approach.}\label{figa1}
    \end{figure}
We first compare the service demand modeling performance of MARLIN with that of a mathematical model-based approach. As shown in Fig.~\ref{figa1}, compared to the ``Predicted (Poisson Model)'' approach, we observe that the predicted values by MARLIN more closely match the actual non-stationary uplink data traffic, particularly during bursts in uplink data traffic that may result from device tracking loss or significant changes in the physical environment. In contrast, the widely-used Poisson model captures only large time-scale service demand patterns, such as average values, but falls short in characterizing the detailed temporal patterns in service demand, e.g., fluctuations caused by user movements or the SLAM-specific camera frame uploading mechanism. Therefore, MARLIN outperforms the Poisson model-based approach in terms of representation power. In addition, MARLIN can switch between the detailed and the simplified data-driven modeling according to variations in the number of uploaded key frames, thereby reducing input data redundancy in the detailed modeling.

    \begin{figure}[t]
        \centering
        \begin{subfigure}[b]{0.40\textwidth}
            \includegraphics[width=\textwidth]{./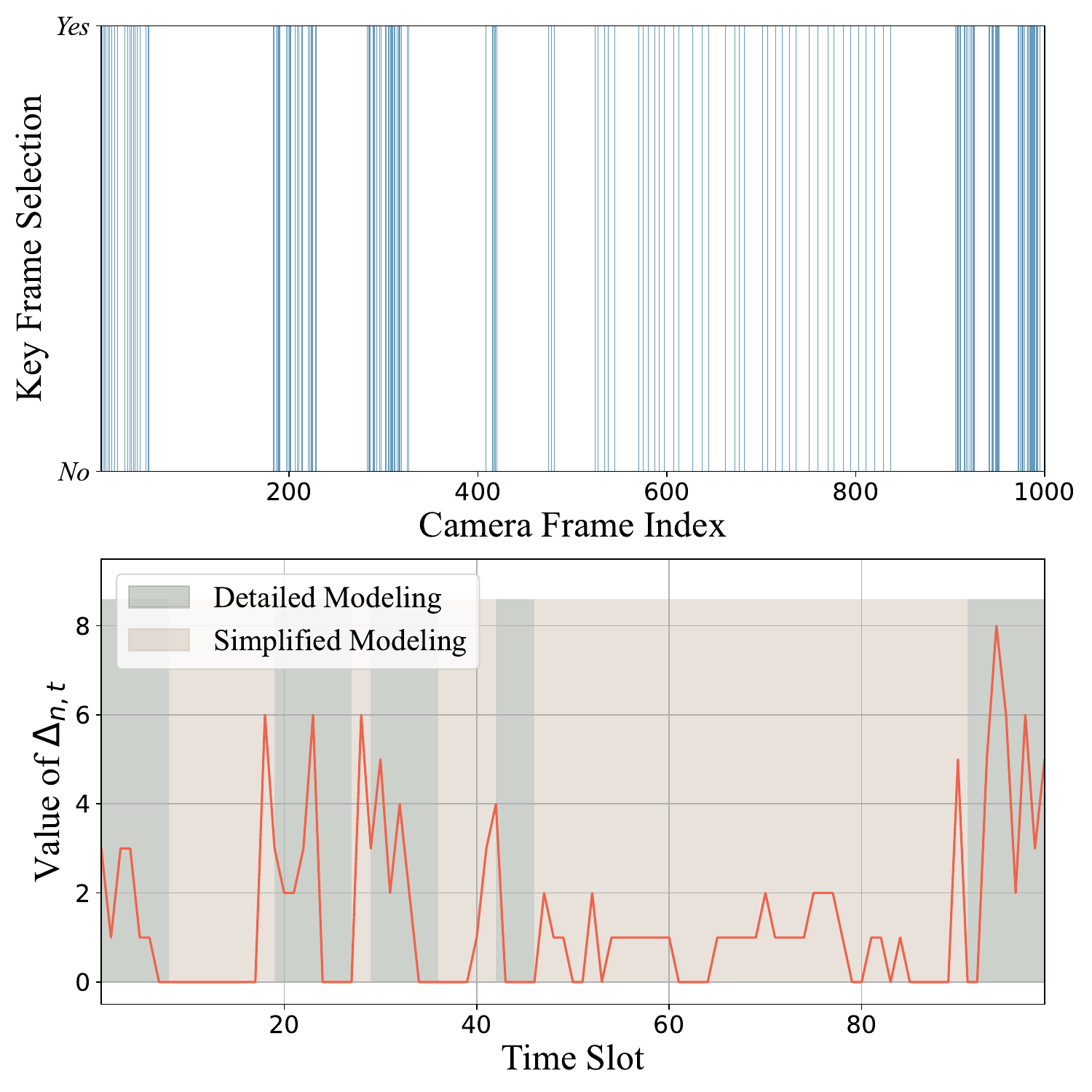}
            \caption{Non-stationary uplink data traffic due to the MAR operational mechanism of device pose tracking.}
            \label{figa2_1}
        \end{subfigure}
        \quad
        \begin{subfigure}[b]{0.40\textwidth}
            \includegraphics[width=\textwidth]{./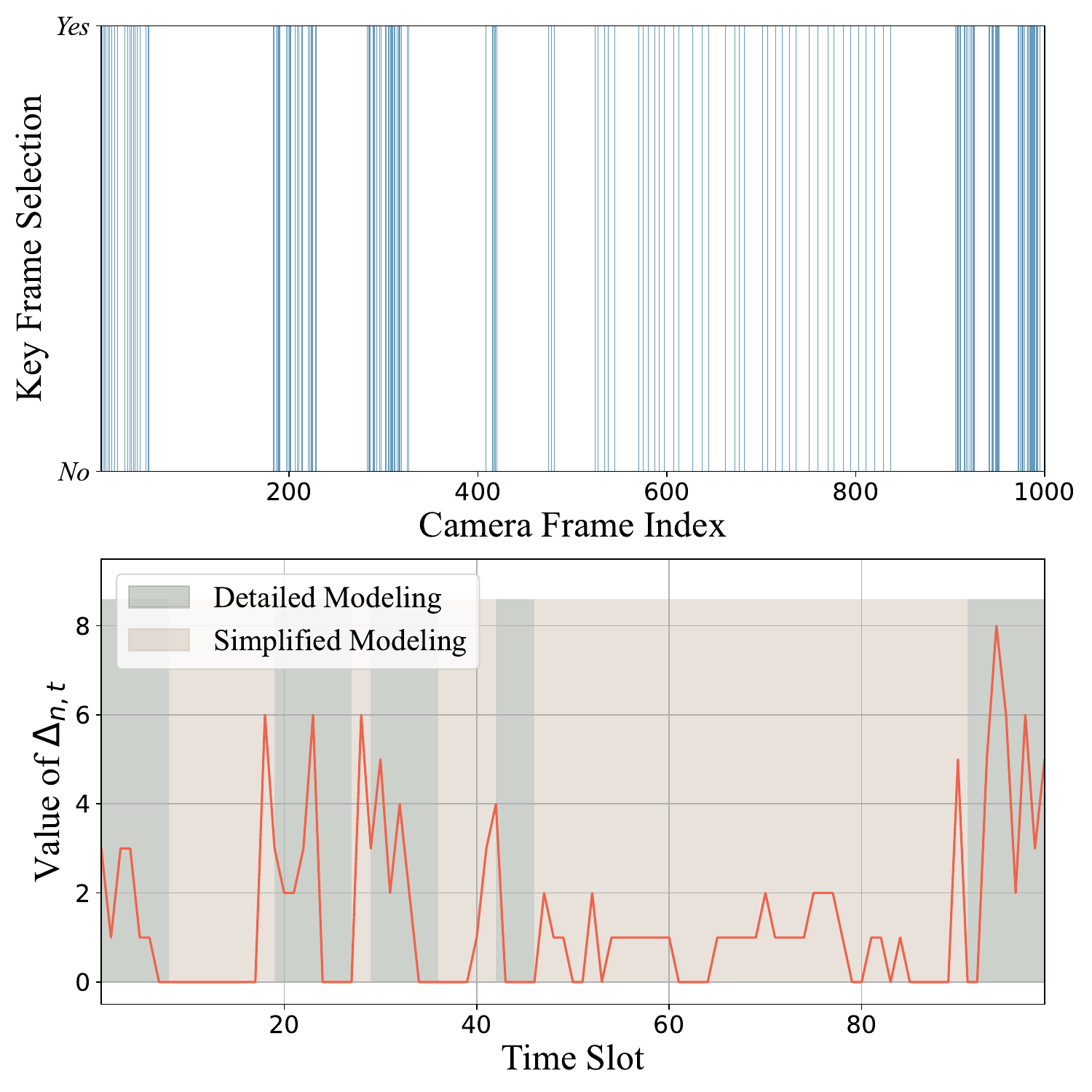}
            \caption{The change of model switching status.}
            \label{figa2_2}
        \end{subfigure}
        \caption{Performance of model switching in adapting to data traffic variations.}
        \label{figa2}
    \end{figure}

In Fig.~\ref{figa2}, we show the adaptivity of MARLIN to the non-stationary uplink data traffic for key frame uploading. For a camera sequence, the status whether each camera frame is selected as a key frame for uploading or not is shown in Fig.~\ref{figa2}(a). It is hard to identify a regular and stationary pattern at first glance since the complex operational mechanism of device pose tracking is adopted to address uncertain user movements. We plot the variations in model switching status, i.e.~$\Delta_{n,t}$, along with the corresponding choice of the detailed and the simplified modeling methods made by our MSF, in Fig.~\ref{figa2}(b). We can observe that the detailed modeling method is used during bursts in uplink data traffic, while the simplified modeling method is used in scenarios with less variations in uplink data traffic. This is because MARLIN targets accurately approximating the non-stationary MAR service demand by using contextual knowledge,~e.g., the MAR operational mechanism of device pose tracking in dealing with uncertain user movements. 

\subsection{Performance of DT-based User-centric Communication Service Provision}\label{sec63}

In this subsection, we compare the performance of our proposed approach with that of both mathematical model-based and data-driven approaches across 15 camera sequences. Since our objective is to minimize the communication resource consumption while ensuring timely camera frame uploading, we evaluate all approaches from two perspectives: 1) the ratio of timely uploaded key frames (TUKF) and 2) the radio spectrum resource provisioning (RP) ratio, defined as the amount of provisioned radio spectrum resource over the amount of required radio spectrum resource.

    \begin{figure}[t]
        \centering
        \includegraphics[width=0.40\textwidth]{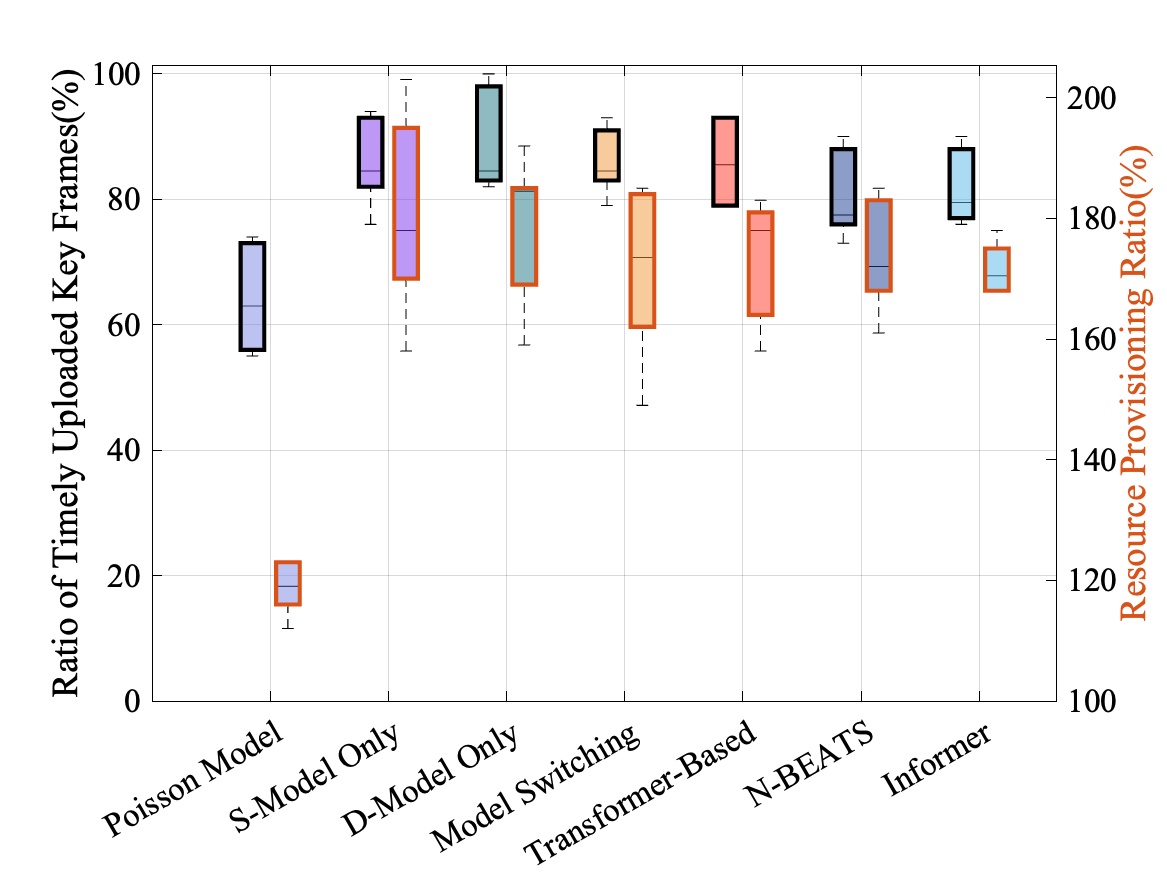}
        \caption{Radio spectrum resource utilization of the designed model switching method in MARLIN.}\label{figa3}
    \end{figure}

In Fig.~\ref{figa3}, we show the performance when only employing the simplified model (labeled as ``S-Model Only''), the performance when only employing the detailed model (labeled as ``D-Model Only''), and the performance when switching between two modeling methods (labeled as ``Model Switching''), respectively. We can make two important observations. First, ``Poisson Model'' is inferior to ``S-Model Only'', ``D-Model Only'', ``Transformer-based'', ``N-BEATS'', and ``Informer'' in both aspects since the data-driven modeling methods based on DNNs offer greater representation power. Second, compared to other data-driven benchmark methods, our approach achieves lower RP ratio without sacrificing the performance of TUKF significantly. This demonstrates the advantage of our approach in communication resource efficiency since the S- and D-models can be adaptively selected in different scenarios via the MSF. Compared to using only a data-driven modeling method, a hidden benefit of ``Model Switching'' lies in its flexibility to balance the modeling accuracy and the inference costs associated with establishing and updating the models in the MUP, by judiciously selecting from the lightweight S-model and the compute-intensive D-model.

    \begin{table*}[t]
      \centering
      \captionsetup{justification=centering,singlelinecheck=false}
      \caption{Comparison Results on Camera Frame Sequences with Different User Movement Patterns.}\label{table2}
      \begin{tabular}{c|cc|cc|cc|cc|cc}
      \hline\hline
      \multirow{2}{*}{\textbf{Approach}} & \multicolumn{2}{c|}{\begin{tabular}[c]{@{}c@{}}\textbf{3FO4K7I2Q0PG}\\ \textbf{original\_1\_1}\end{tabular}} & \multicolumn{2}{c|}{\begin{tabular}[c]{@{}c@{}}\textbf{3FO4K8S8DIKV}\\ \textbf{original\_3\_3}\end{tabular}} & \multicolumn{2}{c|}{\begin{tabular}[c]{@{}c@{}}\textbf{3FO4JWLHP0T1}\\ \textbf{original\_7\_7}\end{tabular}} & \multicolumn{2}{c|}{\begin{tabular}[c]{@{}c@{}}\textbf{3FO4JU3UAANP}\\ \textbf{random\_3\_3}\end{tabular}} & \multicolumn{2}{c}{\begin{tabular}[c]{@{}c@{}}\textbf{3FO4JTFDJQ38}\\ \textbf{random\_7\_7}\end{tabular}} \\ \cline{2-11} 
                              & \multicolumn{1}{c|}{TUKF(\%)}                           & RP(\%)                           & \multicolumn{1}{c|}{TUKF(\%)}                           & RP(\%)                           & \multicolumn{1}{c|}{TUKF(\%)}                           & RP(\%)                           & \multicolumn{1}{c|}{TUKF(\%)}                          & RP(\%)                          & \multicolumn{1}{c|}{TUKF(\%)}                          & RP(\%)                          \\ \hline
      \textbf{Our Proposed}                & \multicolumn{1}{c|}{91}                                 & 171                              & \multicolumn{1}{c|}{93}                                 & 166                              & \multicolumn{1}{c|}{83}                                 & 162                              & \multicolumn{1}{c|}{86}                                & 182                             & \multicolumn{1}{c|}{83}                                & 188                             \\ \hline
      \textbf{Poisson Model-based}           & \multicolumn{1}{c|}{56}                                 & 116                              & \multicolumn{1}{c|}{59}                                 & 116                              & \multicolumn{1}{c|}{73}                                 & 138                              & \multicolumn{1}{c|}{55}                                & 112                             & \multicolumn{1}{c|}{74}                                & 122                             \\ \hline
      \textbf{LSTM-based}             & \multicolumn{1}{c|}{94}                                 & 182                              & \multicolumn{1}{c|}{91}                                 & 195                              & \multicolumn{1}{c|}{86}                                 & 170                              & \multicolumn{1}{c|}{82}                                & 176                             & \multicolumn{1}{c|}{83}                                & 203                             \\ \hline
      \textbf{N-BEATS}                 & \multicolumn{1}{c|}{91}                                 & 185                              & \multicolumn{1}{c|}{91}                                 & 179                              & \multicolumn{1}{c|}{76}                                 & 161                              & \multicolumn{1}{c|}{82}                                & 173                             & \multicolumn{1}{c|}{79}                                & 168                             \\ \hline
      \textbf{Transformer-based}       & \multicolumn{1}{c|}{92}                                 & 183                              & \multicolumn{1}{c|}{91}                                 & 174                              & \multicolumn{1}{c|}{79}                                 & 176                              & \multicolumn{1}{c|}{85}                                & 184                             & \multicolumn{1}{c|}{81}                                & 187                             \\ \hline
      \textbf{Informer}                         & \multicolumn{1}{c|}{88}                                 & 175                              & \multicolumn{1}{c|}{90}                                 & 178                              & \multicolumn{1}{c|}{76}                                 & 160                              & \multicolumn{1}{c|}{79}                                & 169                             & \multicolumn{1}{c|}{77}                                & 176                             \\ \hline\hline
      \end{tabular}
    \end{table*}

In Table~\ref{table2}, we show the performance of these communication service provision approaches on different sub-datasets, i.e., camera frame sequences. In the InteriorNet dataset~\cite{li2018interiornet}, camera frame sequences are broadly classified into ``original'' and ``random'' categories. The former is generated based on human user movement patterns, while the latter is a randomly generated continuous trajectory, where``1-1'', ``3-3'', and ``7-7'' represent different user movement patterns, including velocity and acceleration. In terms of TUKF, we observe that some benchmark approaches may outperform our proposed approach in certain scenarios. For example, the LSTM-based and Transformer-based approaches achieve 3\% and 1\% higher TUKF, respectively, on the ``original\_1\_1'' sub-dataset, enabling more timely uploading of key frames. However, this improvement comes at the cost of approximately 12\% higher radio spectrum resource consumption. This observation indicates that our approach achieves a more efficient balance between TUKF and RP by dynamically switching between different data-driven models, demonstrating superior adaptability to non-stationary uplink data traffic. Another observation is that the performance gain of our proposed approach on the ``original'' sub-datasets is greater than that on the ``random'' sub-datasets. Our hypothesis is that the ``original'' sub-datasets demonstrate phased fluctuations in human user movement patterns, as shown in Fig.~\ref{figa2}(a), in which the proposed ``Model Switching'' excels. This demonstrates the potential of MARLIN to capture user-specific service demand in future 6G networks.

    \begin{figure}[t]
        \centering
        \includegraphics[width=0.40\textwidth]{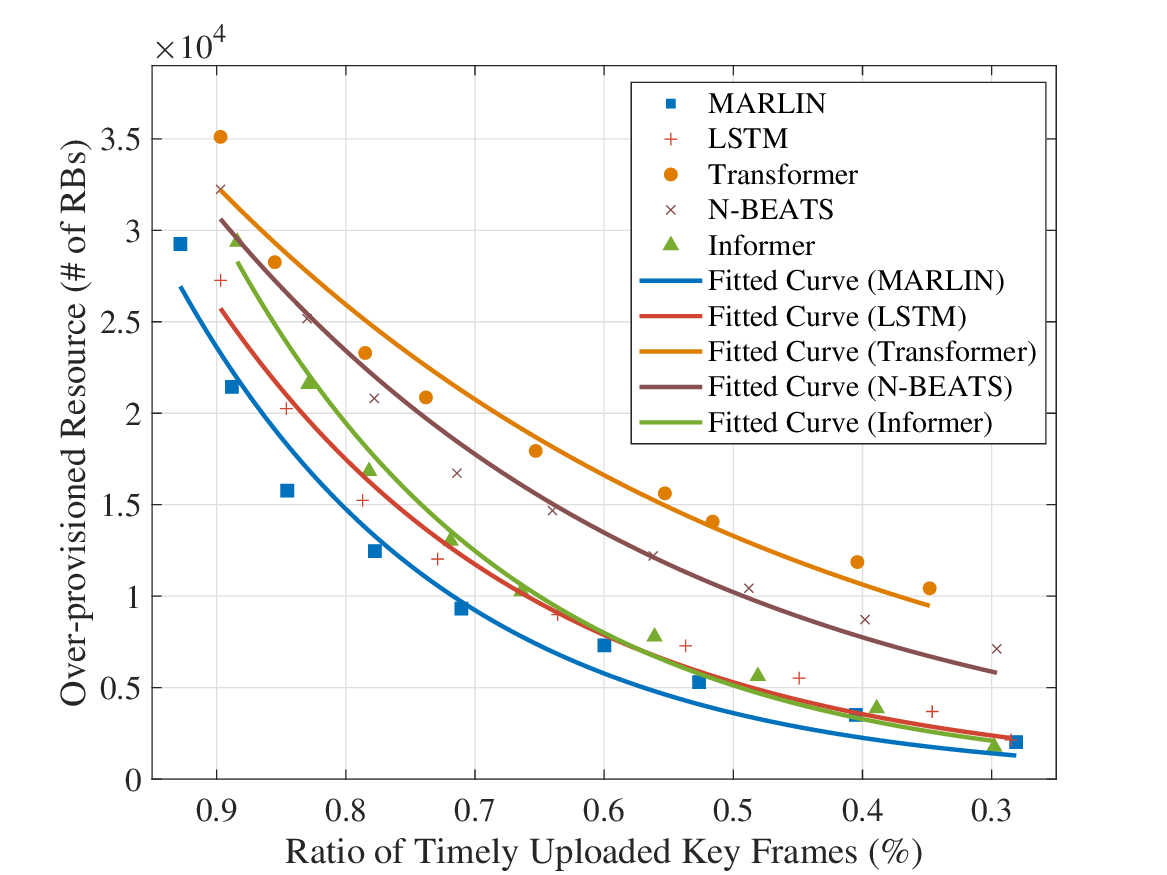}
        \caption{Performance comparison between the proposed DT-based approach and benchmark data-driven approaches.}\label{figb1}
    \end{figure}

In Fig.~\ref{figb1}, we compare the communication service provision performance of the DT-based approach with hat of benchmark data-driven approaches (labeled as ``LSTM'', ``Transformer'', ``N-BEATS'', and ``Informer'') in terms of the amount of over-provisioned radio spectrum resource (in RBs) and TUKF. We can observe that our DT-based approach not only reduces over-provisioning but also ensures the timeliness of key frame uploading of the MAR device, for two reasons. First, switching between different data-driven models can effectively capture the joint impact of the complex SLAM-based camera frame uploading mechanism and uncertain user movements on user-specific service demands. Second, our user-centric communication service provision, guided by theoretical analysis benefit from advanced data-driven techniques while accounting for data-driven modeling inaccuracies, thereby effectively mitigating the over-provisioning of radio spectrum resource. 

    \begin{figure}[t]
        \centering
        \includegraphics[width=0.40\textwidth]{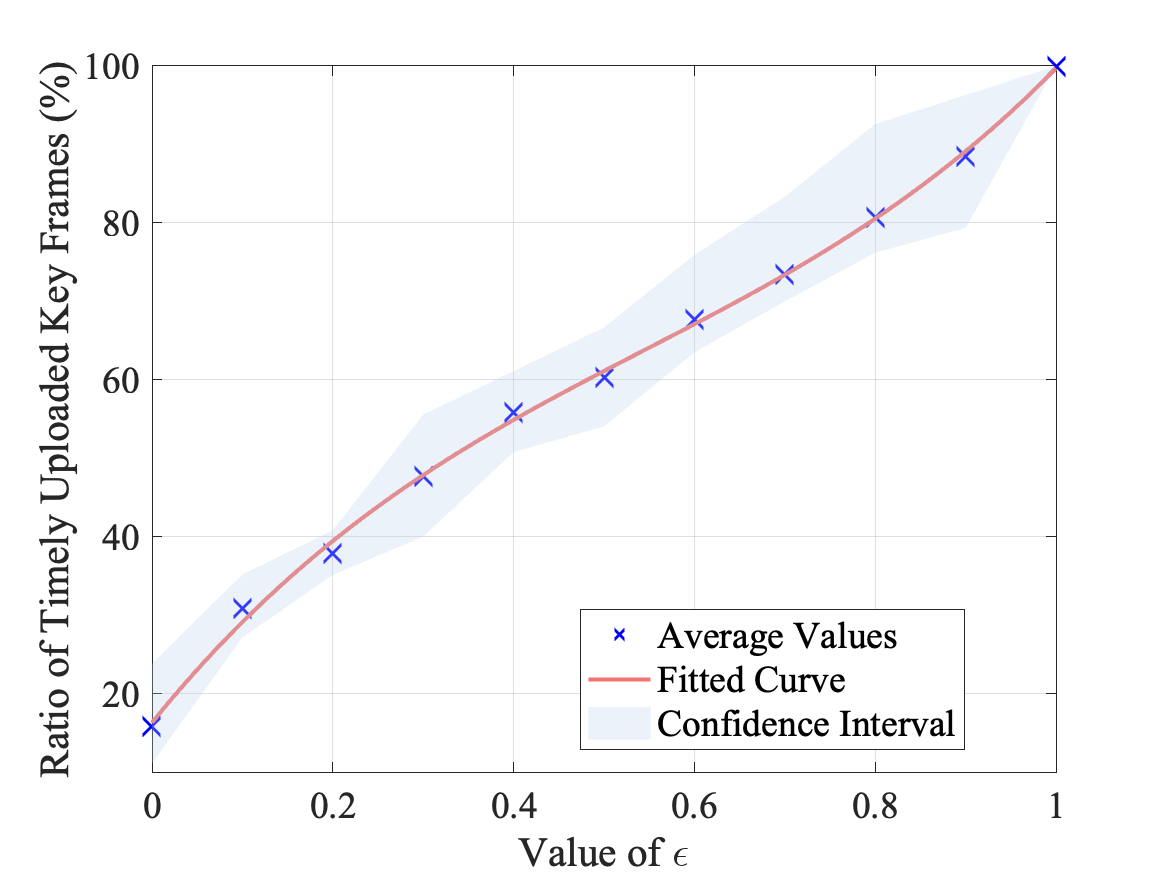}
        \caption{The value of~$\epsilon$ versus the actual ratio of timely uploaded key frames.}\label{figb3}
    \end{figure}

In Fig.~\ref{figb3}, given different values of~$\epsilon$, we plot the actual ratios of timely uploaded key frames with blue markers, each of which is averaged over 6 different camera sequences. The red curve represents a fitted curve, and the light blue shaded area represents the confidence interval between the minimum and maximum values. Given any~$\epsilon$, the radio spectrum resource is reserved for the MAR device based on~\eqref{eq30}. We can observe that the actual ratio of timely uploaded key frames closely approaches the value of~$\epsilon$, despite a slight difference. This is because the result in~\eqref{eq30} is based on the assumption that, given~$\hat{a}_{f}, \forall f \in \mathcal{F}_{t}$, random variables~$a_{f}, \forall f \in \mathcal{F}_{t}$ are i.i.d., and~$P(a_{f} | \hat{\mathbf{a}}_{t}) = P(a_{f} | \hat{a}_{f}), \forall f \in \mathcal{F}_{t}$. However, when the assumption holds, the cost of updating management-oriented data can be reduced. MARLIN can use only three parameters, i.e.,~$p_{t}$, $q_{t}$, and $\lambda_{t}$, as management-oriented data to conduct robust communication service provision with a slight difference in the performance of communication service provision.

    \begin{figure}[t]
        \centering
        \includegraphics[width=0.40\textwidth]{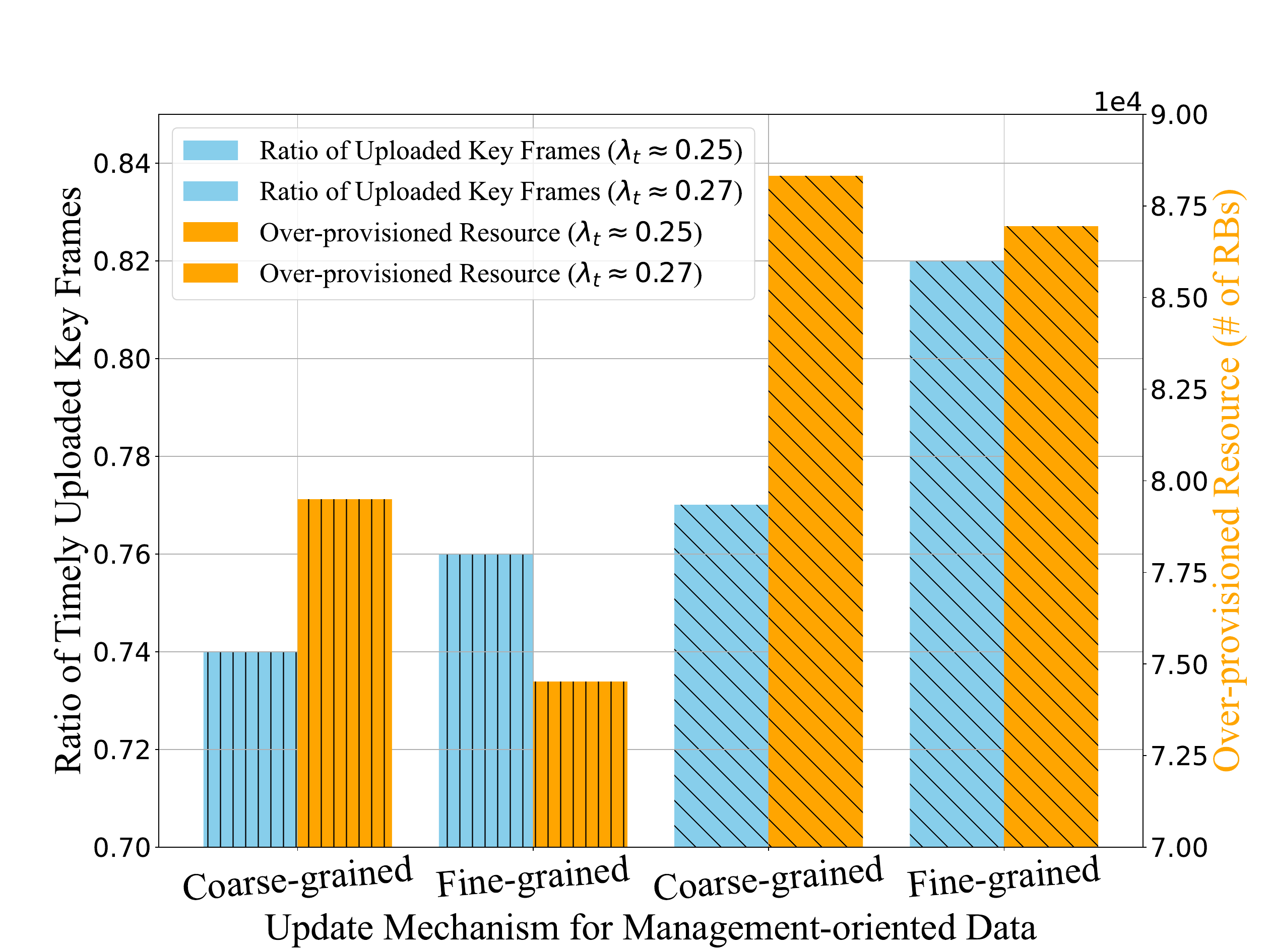}
        \caption{Performance of the fine-grained and the coarse-grained data update methods in MARLIN.}\label{figb2}
    \end{figure}

In Fig.~\ref{figb2}, we compare the performance of fine-grained and coarse-grained data management in MARLIN. As mentioned in~Algorithm~\ref{alg2}, we can separately update the management-oriented data (labeled as ``fine-grained''), e.g.~$p_{t}$, $q_{t}$, and $\lambda_{t}$, for the S-model and D-model. In contrast, we do not differentiate the collection of management-oriented data for the S-model and D-model for the benchmark (labeled as ``coarse-grained''). We observe that, given two different values of~$\lambda_{t}$, the ``fine-grained'' method outperforms the ``coarse-grained'' method in terms of the ratio of timely transmitted key frames and the amount of over-provisioned radio spectrum resource. This is because the ``fine-grained'' data update method empowers user-centric communication service provision by accurately characterizing the potential modeling inaccuracies of the detailed and the simplified modeling methods, especially when the two modeling methods suit different user movement patterns.

      \begin{figure*}[t]
        \centering
        \begin{subfigure}[b]{0.32\textwidth}
            \includegraphics[width=\textwidth]{./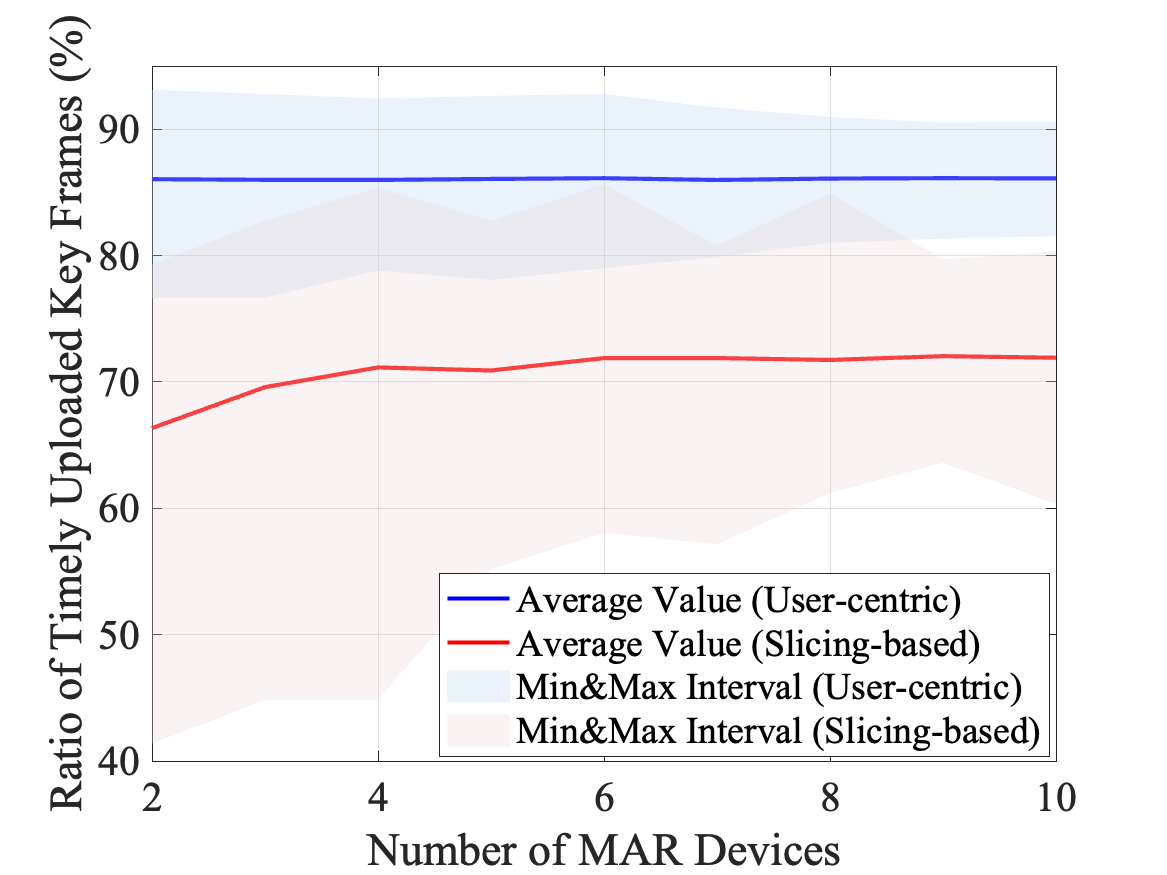}
            \caption{The ratio of timely uploaded key frames versus the number of MAR devices.}
            \label{figc1}
        \end{subfigure}
        \quad
        \begin{subfigure}[b]{0.32\textwidth}
            \includegraphics[width=\textwidth]{./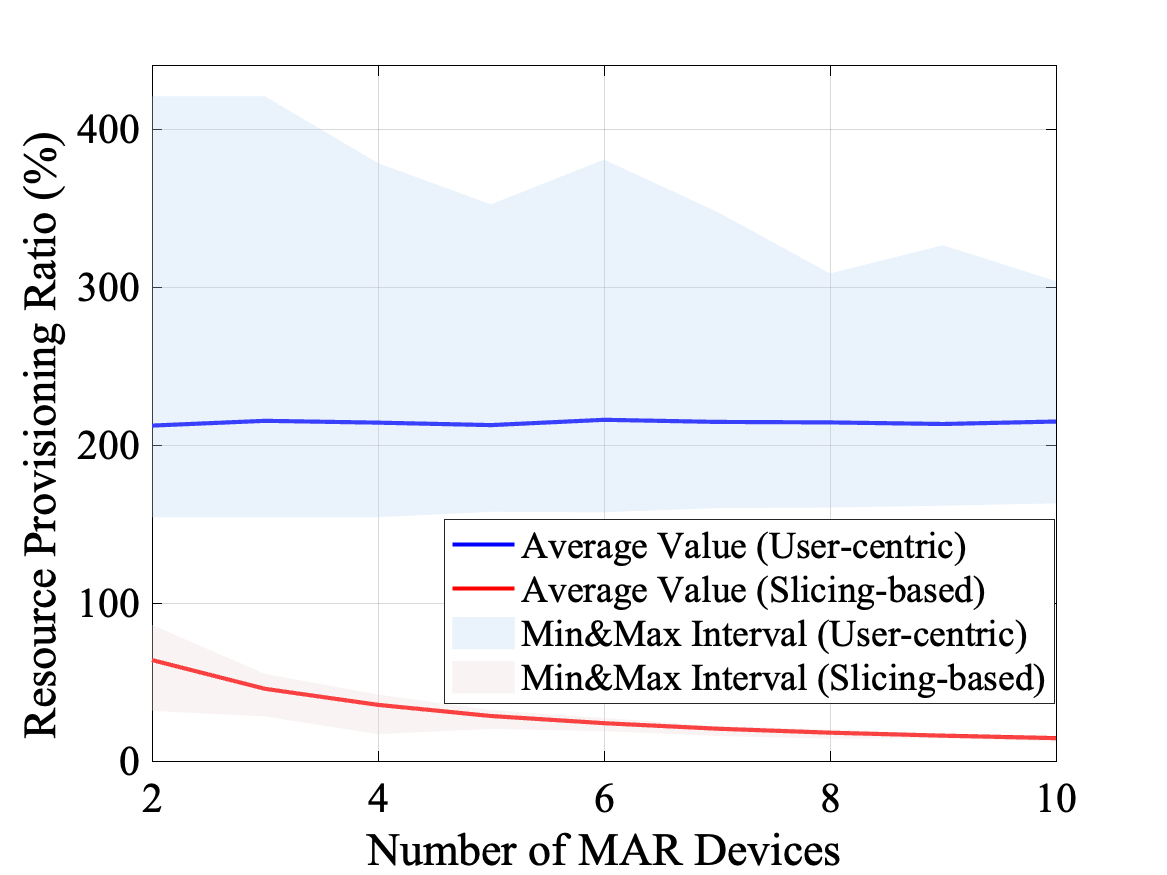}
            \caption{The resource provisioning ratio versus the number of MAR devices.}
            \label{figc2}
        \end{subfigure}
        \begin{subfigure}[b]{0.32\textwidth}
            \includegraphics[width=\textwidth]{./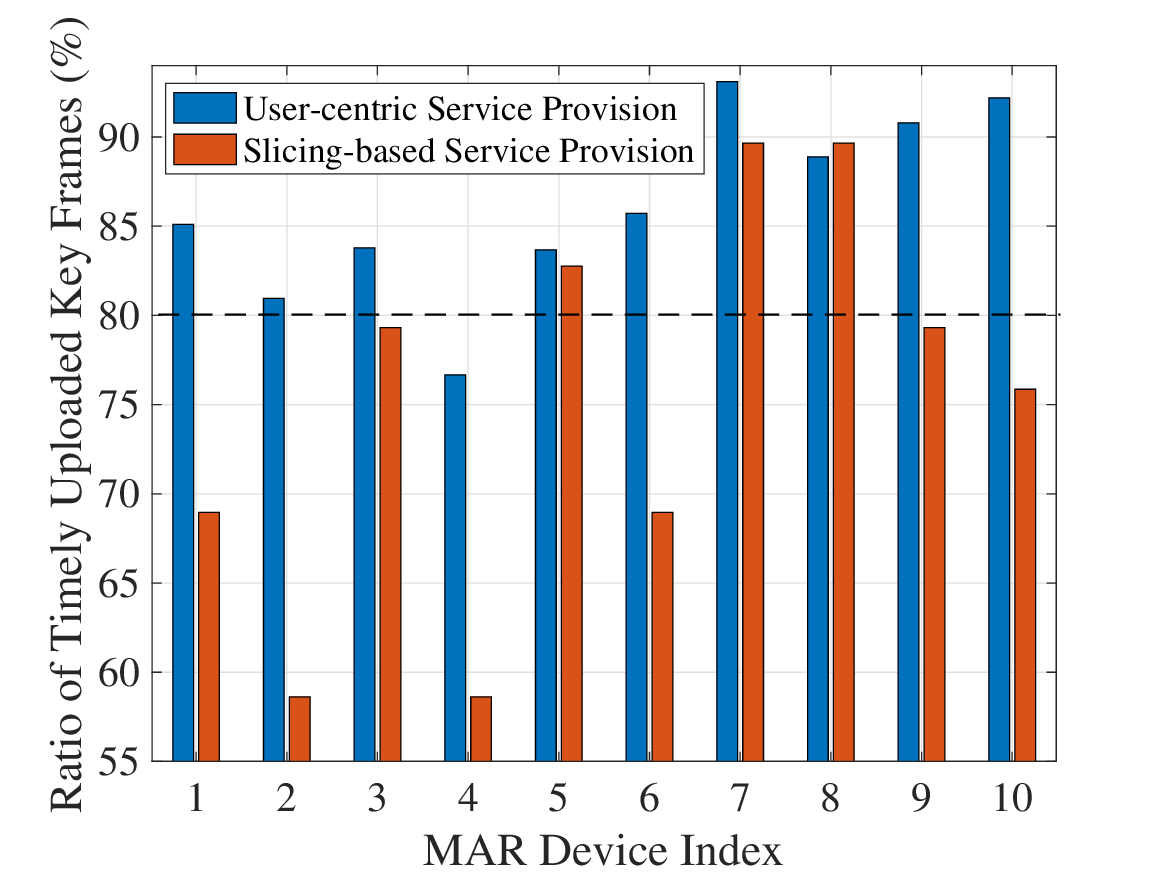}
            \caption{The ratio of timely uploaded key frames of different MAR devices when~$N = 10$.}
            \label{figc3}
        \end{subfigure}
        \caption{Performance comparison between user-centric and slicing-based communication service provision.}
        \label{figc13}
    \end{figure*}

In Fig.~\ref{figc13}, we compare the user-centric and the slicing-based communication service provision in terms of the TUKF and the RP. In Figs.~\ref{figc13}(a) and~\ref{figc13}(b), we plot the ratios as blue and red lines and illustrate the range between the minimum and maximum values using shaded blue and red areas, respectively, where each value is averaged over multiple MAR devices and 1,000 simulation runs. In Fig.~\ref{figc13}(a), we observe that, with the setting of~$\epsilon = 0.8$, the user-centric communication service provision significantly outperforms the slicing-based communication service provision in terms of the TUKF. This is because the proposed approach makes communication service provision decisions to meet the delay requirement of each MAR device based on user-specific service demands estimated by using MARLIN, rather than on averaged service demands across multiple MAR devices. In the case of $N = 10$ and~$\epsilon = 0.8$, the detailed performance differences across MAR devices are shown in Fig.~\ref{figc13}(c). In contrast to slicing-based communication service provision used for 5G, user-centric communication service provision ensures that nearly all MAR devices timely upload their key frames by adapting to user-specific service demands.

In Fig.~\ref{figc13}(b), the average amount of radio spectrum resource reserved for an MAR device in the user-centric communication service provision is less than that in slicing-based communication service provision. This demonstrates that user-centric communication service provision can adapt to user-specific service demand variances. Meanwhile, we observe that slicing-based communication service provision is less sensitive to the variances of individual service demands of MAR devices since the value of~$\Phi^{-1}(\epsilon) \sigma^{2}_{t}/|\mathcal{N}|$ decreases as the number of devices increases. Both observations align with Theorem~\ref{theorem2} and our theoretical analysis in Section~\ref{sec5}. 

\section{Conclusion and Future Work}\label{sec7}

In this paper, we have developed a novel user DT,~i.e., MARLIN, and a DT-based user-centric communication service provision approach to support edge-assisted device pose tracking in MAR. In MARLIN, the well-structured data model enables a flexible configuration of user-specific data attributes and personalized data update. The two designed DT operation functions jointly enable an adaptive switching between two data-driven modeling methods for accurately modeling non-stationary data traffic. Trace-driven simulation results have demonstrated the effectiveness and robustness of our user-centric communication service provision approach in coping with modeling inaccuracies. MARLIN provides an extensible prototype of user DT for capturing the impact of user-specific service demands and reveals the promising potential of fine-grained data management in facilitating user-centric communication service provision in the 6G. In the future, we plan to investigate a hybrid user-centric and slicing-based communication service provision scheme to enhance the quality of experience for MAR users while reducing the cost of data management incurred by user DTs.

\appendices

\section{Proof of Theorem~\ref{theorem1}}\label{appendix:the1} 
\begin{proof}
    
    According to Bayes' theorem, the true positive ratio (TPR) in time slot~$t$, i.e.,~$P(a_{f}=1|\hat{a}_{f} = 1)\,\, \forall f \in \mathcal{F}_{t}$, is given by: 
            \begin{subequations}\label{}
            \begin{align}
                p_{t}^\text{TPR} & =  \frac{P(\hat{a}_{f} | a_{f}=1) P(a_{f}=1)}{P(\hat{a}_{f}=1)}\\
                & = \frac{p_{t}\lambda_{t}}{p_{t}\lambda_{t} +(1-q_{t})(1-\lambda_{t})},
            \end{align}
        \end{subequations}
    where $p_{t} = P(\hat{a}_{f}=1 | a_{f}=1)$ and $q_{t} = P(\hat{a}_{f}=0 | a_{f}=0)$ represent the likelihoods, and $\lambda_{t} = P(a_{f}=1)$ denotes the prior probability in time slot~$t$. Similarly, we can calculate the true negative ratio (TNR) in time slot~$t$, i.e.,~$P(a_{f}=0|\hat{a}_{f} = 0)$, as follows:
           \begin{equation}\label{}
                 p_{t}^\text{TNR} = \frac{q_{t}(1-\lambda_{t})}{q_{t}(1-\lambda_{t}) + (1-p_{t})\lambda_{t}}.  
           \end{equation}

   Under the assumption that, given $\hat{a}_{f}, \forall f \in \mathcal{F}_{t}$, random variables~$a_{f}, \forall f \in \mathcal{F}_{t}$ are i.i.d., and $P(a_{f} | \hat{\mathbf{a}}_{t}) = P(a_{f} | \hat{a}_{f}), \forall f \in \mathcal{F}_{t}$, we obtain the following equation:
           \begin{equation}\label{}
                 P( \mathbf{a}_{t} | \hat{\mathbf{a}}_{t}) = \prod_{f \in \mathcal{F}_{t}}{P(a_{f}|\hat{a}_{f})}, \,\, \forall t \in \mathcal{T}.  
           \end{equation}
    Given $\hat{A}_{t}$, i.e., the number of camera frames predicted to be key frames, the number of true key frames among them follows a binomial distribution, i.e.,~$Y \sim B(\hat{A}_{t}, p_{t}^\text{TPR})$. Similarly, given the number of~$F_{t} - \hat{A}_{t}$ camera frames predicted to not be key frames, the number of true key frames among them follows a binomial distribution, i.e.,~$Z \sim B(F_{t} - \hat{A}_{t}, 1-p_{t}^\text{TNR})$. Thus,~$Y+Z$ represents the number of true key frames given the prediction result, and the probability~$P( \sum_{f \in \mathcal{F}_{t}}{a_{f} } \le k_{t} | \hat{\mathbf{a}}_{t} )$ is as follows: 
            \begin{equation}\label{eqa1}
                \begin{aligned} 
                    & P(Y+Z \leq k_{t} | \hat{\mathbf{a}}_{t}) = \sum_{k=0}^{k_{t}}{P(Y+Z = k | \hat{\mathbf{a}}_{t} )}\\
                    & = \sum_{k=0}^{N_{t}}{ \sum_{j = \max(0, k-(F_{t}-\hat{A}_{t}))}^{\min(\hat{A}_{t},k)}{P(Y=j)P(Z=k-j)}},\\
                \end{aligned} 
            \end{equation}
    where 
           \begin{equation}\label{}
                P(Y=j) = \binom{\hat{A}_{t}}{j} (p_{t}^\text{TPR})^{j}(1-p_{t}^\text{TPR})^{\hat{A}_{t}-j},   
           \end{equation}
    and
           \begin{equation}\label{}
                P(Z=k - j) = \binom{F_{t}-\hat{A}_{t}}{k-j} (1-p_{t}^\text{TNR})^{k-j}(p_{t}^\text{TNR})^{F_{t}-\hat{A}_{t}-k+j}.  
           \end{equation}
    Therefore,~$P( \sum_{f \in \mathcal{F}_{t}}{a_{f} } \le k_{t} | \hat{\mathbf{a}}_{t} )$ can be calculated based on parameters~$p_{t}$, $q_{t}$, and $\lambda_{t}$, as shown in~\eqref{eq30}.
\end{proof}

\section{Proof of Theorem~\ref{theorem2}}\label{appendix:the2} 
\begin{proof}
    The optimal solution to Problem~P3, i.e.,~$B^{\text{s}, *}_{t}$, is given by:
        \begin{equation}\label{}
          B^{\text{s}, *}_{t} = \frac{\alpha T^\text{r} |\mathcal{N}|}{\log (1 + \gamma_{t})}  k^{\text{s},*}_{t},\,\, \forall t \in \mathcal{T},
        \end{equation}
    where~$k^{\text{s},*}_{t}$ denotes the minimum value of $k^{\text{s}}_{t}$, given by:
        \begin{equation}\label{eqb2}
          k^{\text{s},*}_{t} =  \arg \min_{k^{\text{s}}_{t}} P \left( \frac{1}{|\mathcal{N}|} \sum_{n \in \mathcal{N}}{\tilde{k}_{n,t}} \leq k^{\text{s}}_{t} \right) \ge \epsilon. 
        \end{equation}
    We can derive \eqref{eqb2} as follows:
        \begin{equation}\label{}
          k^{\text{s},*}_{t} =  \arg \min_{k^{\text{s}}_{t}} g(k^{\text{s}}_{t}) \ge \epsilon,
        \end{equation}
    where~$g(k^{\text{s}})$ is the cumulative distribution function (CDF) of $\frac{1}{|\mathcal{N}|} \sum_{n \in \mathcal{N}}{\tilde{k}_{n,t}}$. Since CDF~$g(k^{\text{s}}_{t})$ is non-decreasing,
        \begin{equation}\label{}
          k^{\text{s},*}_{t} =  \lceil g^{-1}(\epsilon) \rceil, 
        \end{equation}
    where~$g^{-1}$ represents the inverse CDF, i.e., quantile function, and~$ \lceil \cdot \rceil$ is the ceiling function. 

  Under the assumption that random variables~$\tilde{k}_{n,t}, \forall n  \in \mathcal{N}$ are i.i.d with mean~$\bar{k}_{t}$ and variance~$\sigma^{2}_{t}$, their sum~$\frac{1}{|\mathcal{N}|} \sum_{n \in \mathcal{N}}{\tilde{k}_{n,t}}$ follows a normal distribution with mean~$\bar{k}_{t}$ and variance~$\sigma^{2}_{t}/|\mathcal{N}|$ as $|\mathcal{N}| \rightarrow \infty$, according to the central limit theorem. As a result, we can obtain the closed-form solution to~$k^{\text{s}}_{t}$ as follows:
        \begin{equation}\label{}
          k^{\text{s},*}_{t} =  \lceil \bar{k}_{t} + \frac{\sigma^{2}_{t}}{|\mathcal{N}|} \Phi^{-1}(\epsilon)\rceil, 
        \end{equation}
    where $\Phi^{-1}(\epsilon)$ represents the inverse CDF, i.e., a quantile function, of the standard normal distribution. 

    Generally, the value of SNR~$\gamma_{n,t}$ and the number of uploaded key frames~$\tilde{k}_{n,t}$ are independent. Given the mean SNR~$\bar{\gamma}_{t}$ over $|\mathcal{N}|$ MAR devices, the optimal solution to the slicing-based service provision in Problem~P3 can be derived based on~$k^{\text{s},*}_{t}$ in~\eqref{eq19}. 
\end{proof}






%
\bibliography{ref_AR2}

\begin{thebibliography}{10}
\providecommand{\url}[1]{#1}
\csname url@samestyle\endcsname
\providecommand{\newblock}{\relax}
\providecommand{\bibinfo}[2]{#2}
\providecommand{\BIBentrySTDinterwordspacing}{\spaceskip=0pt\relax}
\providecommand{\BIBentryALTinterwordstretchfactor}{4}
\providecommand{\BIBentryALTinterwordspacing}{\spaceskip=\fontdimen2\font plus
\BIBentryALTinterwordstretchfactor\fontdimen3\font minus
  \fontdimen4\font\relax}
\providecommand{\BIBforeignlanguage}[2]{{%
\expandafter\ifx\csname l@#1\endcsname\relax
\typeout{** WARNING: IEEEtran.bst: No hyphenation pattern has been}%
\typeout{** loaded for the language `#1'. Using the pattern for}%
\typeout{** the default language instead.}%
\else
\language=\csname l@#1\endcsname
\fi
#2}}
\providecommand{\BIBdecl}{\relax}
\BIBdecl

\bibitem{zhou2024user}
C.~Zhou, J.~Gao, Y.~Liu, S.~Hu, N.~Cheng, and X.~S. Shen, ``User-centric
  service provision for edge-assisted mobile {AR}: {A} digital twin-based
  approach,'' in \emph{IEEE/CIC ICCC)}.\hskip 1em plus 0.5em minus 0.4em\relax
  IEEE, Hangzhou, China, 2024.

\bibitem{zhou2024user_wcm}
C.~Zhou, S.~Hu, J.~Gao, X.~Huang, W.~Zhuang, and X.~Shen, ``User-centric
  immersive communications in {6G}: {A} data-oriented framework via digital
  twin,'' \emph{IEEE Wireless Commun.}, vol.~32, no.~3, pp. 122--129, 2025.

\bibitem{jin2023ebublio}
Y.~Jin, J.~Liu, F.~Wang, and S.~Cui, ``Ebublio: {Edge} assisted multi-user
  360-degree video streaming,'' \emph{IEEE IoT J.}, vol.~10, no.~17, pp.
  15\,408--15\,419, 2023.

\bibitem{huzaifa2021illixr}
M.~Huzaifa, R.~Desai, S.~Grayson, X.~Jiang, Y.~Jing, J.~Lee, F.~Lu, Y.~Pang,
  J.~Ravichandran, F.~Sinclair \emph{et~al.}, ``{ILLIXR}: {Enabling} end-to-end
  extended reality research,'' in \emph{Proc. IEEE IISWC}, Storrs, CT, USA,
  2021.

\bibitem{chen2023networked}
J.~Chen, K.~Ramakrishnan, A.~Dhakazl, and X.~Ran, ``Networked architectures for
  localization-based multi-user augmented reality,'' \emph{IEEE Commun. Mag.},
  vol.~61, no.~12, pp. 104--110, 2023.

\bibitem{wang2022leaf}
H.~Wang, B.~Kim, J.~Xie, and Z.~Han, ``{LEAF+AIO}: {Edge}-assisted energy-aware
  object detection for mobile augmented reality,'' \emph{IEEE Trans. Mobile
  Comput.}, vol.~22, no.~10, pp. 5933--5948, 2022.

\bibitem{wang2017joint}
F.~Wang, J.~Xu, X.~Wang, and S.~Cui, ``Joint offloading and computing
  optimization in wireless powered mobile-edge computing systems,'' \emph{IEEE
  Trans. Wirel. Commun.}, vol.~17, no.~3, pp. 1784--1797, 2017.

\bibitem{shen2021holistic}
X.~Shen, J.~Gao, W.~Wu, M.~Li, C.~Zhou, and W.~Zhuang, ``Holistic network
  virtualization and pervasive network intelligence for {6G},'' \emph{IEEE
  Commun. Surveys Tuts.}, vol.~24, no.~1, pp. 1--30, 2021.

\bibitem{ran2020multi}
X.~Ran, C.~Slocum, Y.-Z. Tsai, K.~Apicharttrisorn, M.~Gorlatova, and J.~Chen,
  ``Multi-user augmented reality with communication efficient and spatially
  consistent virtual objects,'' in \emph{Proc. ACM CoNEXT}, New York, NY, USA,
  2020.

\bibitem{navarro2020survey}
J.~Navarro-Ortiz, P.~Romero-Diaz, S.~Sendra, P.~Ameigeiras, J.~J. Ramos-Munoz,
  and J.~M. Lopez-Soler, ``A survey on {5G} usage scenarios and traffic
  models,'' \emph{IEEE Commun. Surveys Tuts.}, vol.~22, no.~2, pp. 905--929,
  2020.

\bibitem{campos2021orb}
C.~Campos, R.~Elvira, J.~J.~G. Rodr{\'\i}guez, J.~M. Montiel, and J.~D.
  Tard{\'o}s, ``{ORB-SLAM3}: {An} accurate open-source library for visual,
  visual--inertial, and multimap {SLAM},'' \emph{IEEE Trans. Robot.}, vol.~37,
  no.~6, pp. 1874--1890, 2021.

\bibitem{li2022risk}
X.~Li, R.~Xiao, M.~Pan, and N.~Zhao, ``Risk-averse investment strategy for
  {MEC} service provisioning: {A} data-driven distributionally robust
  solution,'' \emph{IEEE IoT J.}, vol.~9, no.~23, pp. 24\,148--24\,160, 2022.

\bibitem{li2020data}
L.~Li, D.~Shi, R.~Hou, X.~Li, J.~Wang, H.~Li, and M.~Pan, ``Data-driven
  optimization for cooperative edge service provisioning with demand
  uncertainty,'' \emph{IEEE IoT J.}, vol.~8, no.~6, pp. 4317--4328, 2020.

\bibitem{wu2023characterizing}
F.~Wu, F.~Lyu, J.~Ren, P.~Yang, K.~Qian, S.~Gao, and Y.~Zhang, ``Characterizing
  {Internet} card user portraits for efficient churn prediction model design,''
  \emph{IEEE Trans. Mobile Comput.}, vol.~23, no.~2, pp. 1735--1752, 2024.

\bibitem{shen2023toward}
X.~Shen, J.~Gao, M.~Li, C.~Zhou, S.~Hu, M.~He, and W.~Zhuang, ``Toward
  immersive communications in {6G},'' \emph{Front. Comput. Sci.}, vol.~4, 2023.

\bibitem{zhou2024digital}
C.~Zhou, J.~Gao, M.~Li, N.~Cheng, X.~Shen, and W.~Zhuang, ``Digital twin-based
  {3D} map management for edge-assisted device pose tracking in mobile {AR},''
  \emph{IEEE IoT J.}, vol.~11, no.~10, pp. 17\,812--17\,826, 2024.

\bibitem{sun2024knowledge}
R.~Sun, N.~Cheng, C.~Li, F.~Chen, and W.~Chen, ``Knowledge-driven deep learning
  paradigms for wireless network optimization in {6G},'' \emph{IEEE Netw.},
  2024, to be published, doi: 10.1109/MNET.2024.3352257.

\bibitem{linowes2017augmented}
J.~Linowes and K.~Babilinski, \emph{Augmented reality for developers: {Build}
  practical augmented reality applications with {Unity}, {ARCore}, {ARKit}, and
  {Vuforia}}.\hskip 1em plus 0.5em minus 0.4em\relax Packt Publishing Ltd,
  2017.

\bibitem{hu2023adaptive}
S.~Hu, M.~Li, J.~Gao, C.~Zhou, and X.~Shen, ``Adaptive device-edge
  collaboration on {DNN} inference in {AIoT}: {A} digital twin-assisted
  approach,'' \emph{IEEE IoT J.}, vol.~11, no.~7, pp. 12\,893--12\,908, 2023.

\bibitem{han2022comic}
B.~Han, P.~Pathak, S.~Chen, and L.-F.~C. Yu, ``{CoMIC}: {A} collaborative
  mobile immersive computing infrastructure for conducting multi-user {XR}
  research,'' \emph{IEEE Netw.}, pp. 1--9, 2022.

\bibitem{khosoussi2019reliable}
K.~Khosoussi, M.~Giamou, G.~S. Sukhatme, S.~Huang, G.~Dissanayake, and J.~P.
  How, ``Reliable graphs for {SLAM},'' \emph{The International Journal of
  Robotics Research}, vol.~38, no. 2-3, pp. 260--298, 2019.

\bibitem{davison2007monoslam}
A.~J. Davison, I.~D. Reid, N.~D. Molton, and O.~Stasse, ``{MonoSLAM}:
  {Real}-time single camera {SLAM},'' \emph{IEEE Trans. Pattern Anal. Mach.
  Intell.}, vol.~29, no.~6, pp. 1052--1067, 2007.

\bibitem{chou2021efficient}
C.~Chou and C.~Chou, ``Efficient and accurate tightly-coupled visual-{Lidar}
  {SLAM},'' \emph{IEEE Trans. Intell. Transp. Syst.}, vol.~23, no.~9, pp.
  14\,509--14\,523, 2021.

\bibitem{carlone2018attention}
L.~Carlone and S.~Karaman, ``Attention and anticipation in fast visual-inertial
  navigation,'' \emph{IEEE Trans. Robot.}, vol.~35, no.~1, pp. 1--20, 2018.

\bibitem{chen2018marvel}
K.~Chen, T.~Li, H.-S. Kim, D.~E. Culler, and R.~H. Katz, ``{Marvel}: {Enabling}
  mobile augmented reality with low energy and low latency,'' in \emph{Proc.
  ACM SenSys}, Shenzhen, China, 2018.

\bibitem{ben2022edge}
A.~J. Ben~Ali, M.~Kouroshli, S.~Semenova, Z.~S. Hashemifar, S.~Y. Ko, and
  K.~Dantu, ``{Edge}-{SLAM}: {Edge}-assisted visual simultaneous localization
  and mapping,'' \emph{ACM Trans. Embed. Comput. Syst.}, vol.~22, no.~1, pp.
  1--31, 2022.

\bibitem{chen2023adaptslam}
Y.~Chen, H.~Inaltekin, and M.~Gorlatova, ``{AdaptSLAM}: {Edge}-assisted
  adaptive {SLAM} with resource constraints via uncertainty minimization,'' in
  \emph{Proc. IEEE INFOCOM}, New York, NY, USA, 2023.

\bibitem{dhakal2022slam}
A.~Dhakal, X.~Ran, Y.~Wang, J.~Chen, and K.~Ramakrishnan, ``{SLAM}-share:
  {Visual} simultaneous localization and mapping for real-time multi-user
  augmented reality,'' in \emph{Proc. ACM CoNEXT}, Rome, Italy, 2022.

\bibitem{ren2020edge}
P.~Ren, X.~Qiao, Y.~Huang, L.~Liu, C.~Pu, S.~Dustdar, and J.~Chen, ``{Edge}
  {AR} {X5}: {An} edge-assisted multi-user collaborative framework for mobile
  web augmented reality in {5G} and beyond,'' \emph{IEEE Trans. Cloud Comput.},
  vol.~10, no.~4, pp. 2521--2537, 2020.

\bibitem{lin2018towards}
S.-C. Lin, P.~Wang, I.~F. Akyildiz, and M.~Luo, ``Towards optimal network
  planning for software-defined networks,'' \emph{IEEE Trans. Mobile Comput.},
  vol.~17, no.~12, pp. 2953--2967, 2018.

\bibitem{wu2022demand}
S.-H. Wu, C.-H. Ko, and H.-L. Chao, ``On-demand coordinated spectrum and
  resource provisioning under an open {C-RAN} architecture for dense small cell
  networks,'' \emph{IEEE Trans. Mobile Comput.}, vol.~23, no.~1, pp. 673--688,
  2024.

\bibitem{zhang2020latency}
L.~Zhang and N.~Ansari, ``Latency-aware {IoT} service provisioning in
  {UAV}-aided mobile-edge computing networks,'' \emph{IEEE IoT J.}, vol.~7,
  no.~10, pp. 10\,573--10\,580, 2020.

\bibitem{qu2021service}
Y.~Qu, H.~Dai, H.~Wang, C.~Dong, F.~Wu, S.~Guo, and Q.~Wu, ``Service
  provisioning for {UAV}-enabled mobile edge computing,'' \emph{IEEE J. Sel.
  Areas Commun.}, vol.~39, no.~11, pp. 3287--3305, 2021.

\bibitem{zhang2021aoi}
X.~Zhang, J.~Wang, and H.~V. Poor, ``{AoI}-driven statistical delay and
  error-rate bounded {QoS} provisioning for {mURLLC} over {UAV}-multimedia {6G}
  mobile networks using {FBC},'' \emph{IEEE J. Sel. Areas Commun.}, vol.~39,
  no.~11, pp. 3425--3443, 2021.

\bibitem{qu2021resilient}
Y.~Qu, D.~Lu, H.~Dai, H.~Tan, S.~Tang, F.~Wu, and C.~Dong, ``Resilient service
  provisioning for edge computing,'' \emph{IEEE IoT J.}, vol.~10, no.~3, pp.
  2255--2271, 2021.

\bibitem{ye2018dynamic}
Q.~Ye, W.~Zhuang, S.~Zhang, A.-L. Jin, X.~Shen, and X.~Li, ``Dynamic radio
  resource slicing for a two-tier heterogeneous wireless network,'' \emph{IEEE
  Trans. Veh. Technol.}, vol.~67, no.~10, pp. 9896--9910, 2018.

\bibitem{jia2021vnf}
Z.~Jia, M.~Sheng, J.~Li, D.~Zhou, and Z.~Han, ``{VNF}-based service provision
  in software defined {LEO} satellite networks,'' \emph{IEEE Trans. Wirel.
  Commun.}, vol.~20, no.~9, pp. 6139--6153, 2021.

\bibitem{qiu2022online}
Y.~Qiu, J.~Liang, V.~C. Leung, X.~Wu, and X.~Deng, ``Online
  reliability-enhanced virtual network services provisioning in fault-prone
  mobile edge cloud,'' \emph{IEEE Trans. Wirel. Commun.}, vol.~21, no.~9, pp.
  7299--7313, 2022.

\bibitem{wang2020sfc}
G.~Wang, S.~Zhou, S.~Zhang, Z.~Niu, and X.~Shen, ``{SFC}-based service
  provisioning for reconfigurable space-air-ground integrated networks,''
  \emph{IEEE J. Sel. Areas Commun.}, vol.~38, no.~7, pp. 1478--1489, 2020.

\bibitem{belgiovine2021deep}
M.~Belgiovine, K.~Sankhe, C.~Bocanegra, D.~Roy, and K.~R. Chowdhury, ``Deep
  learning at the edge for channel estimation in beyond-{5G} massive {MIMO},''
  \emph{IEEE Wireless Commun.}, vol.~28, no.~2, pp. 19--25, 2021.

\bibitem{atawia2016joint}
R.~Atawia, H.~Abou-Zeid, H.~S. Hassanein, and A.~Noureldin, ``Joint
  chance-constrained predictive resource allocation for energy-efficient video
  streaming,'' \emph{IEEE J. Sel. Areas Commun.}, vol.~34, no.~5, pp.
  1389--1404, 2016.

\bibitem{kipf2016semi}
T.~N. Kipf and M.~Welling, ``Semi-supervised classification with graph
  convolutional networks,'' \emph{arXiv:1609.02907}, 2017, [Online]. Available:
  https://doi.org/10.48550/arXiv.1609.02907.

\bibitem{10771784}
Y.~Dai, L.~Lyu, N.~Cheng, M.~Sheng, J.~Liu, X.~Wang, S.~Cui, L.~Cai, and
  X.~Shen, ``A survey of graph-based resource management in wireless networks -
  {Part} {II}: {Learning} approaches,'' \emph{IEEE Trans. Cogn. Commun. Netw.},
  pp. 1--23, 2024.

\bibitem{zhang2019graph}
S.~Zhang, H.~Tong, J.~Xu, and R.~Maciejewski, ``Graph convolutional networks:
  {A} comprehensive review,'' \emph{Computational Social Networks}, vol.~6,
  no.~1, pp. 1--23, 2019.

\bibitem{hochreiter1997long}
S.~Hochreiter and J.~Schmidhuber, ``Long short-term memory,'' \emph{Neural
  computation}, vol.~9, no.~8, pp. 1735--1780, 1997.

\bibitem{sak2014long}
H.~Sak, A.~W. Senior, and F.~Beaufays, ``Long short-term memory based recurrent
  neural network architectures for large vocabulary speech recognition,''
  \emph{arXiv:1402.1128}, 2014, [Online]. Available:
  https://doi.org/10.48550/arXiv.1402.1128.

\bibitem{li2018interiornet}
W.~Li, S.~Saeedi, J.~McCormac, R.~Clark, D.~Tzoumanikas, Q.~Ye, Y.~Huang,
  R.~Tang, and S.~Leutenegger, ``{InteriorNet}: {Mega}-scale multi-sensor
  photo-realistic indoor scenes dataset,'' in \emph{British Machine Vision
  Conference}, 2018, Newcastle, UK.

\bibitem{ma2023nomore}
X.~Ma, Q.~Zeng, H.~Chi, and L.~Luo, ``No more companion {Apps} hacking but one
  dongle: {Hub-based} blackbox fuzzing of {loT} firmware,'' in \emph{Proc.~ACM
  MobiSys}, Helsinki, Finland, 2023.

\bibitem{ma2025matterguard}
X.~Ma, J.~Chen, L.~Luo, and Q.~Zeng, ``{LLM}-assisted {IoT} testing: {Finding}
  conformance bugs in {Matter} {SDKs},'' in \emph{Proc. ACM MobiCom}, Hong
  Kong, China, 2025.

\bibitem{wang2020long}
Z.~Wang, X.~Su, and Z.~Ding, ``Long-term traffic prediction based on {LSTM}
  encoder-decoder architecture,'' \emph{IEEE Trans. Intell. Transp. Syst.},
  vol.~22, no.~10, pp. 6561--6571, 2021.

\bibitem{oreshkinn}
B.~N. Oreshkin, D.~Carpov, N.~Chapados, and Y.~Bengio, ``{N-BEATS}: Neural
  basis expansion analysis for interpretable time series forecasting,'' in
  \emph{ICLR}, 2020, Addis Ababa, Ethiopia.

\bibitem{zhou2021informer}
H.~Zhou, S.~Zhang, J.~Peng, S.~Zhang, J.~Li, H.~Xiong, and W.~Zhang,
  ``Informer: Beyond efficient transformer for long sequence time-series
  forecasting,'' in \emph{Proc. AAAI}, 2021, Washington, DC, USA.

\end{thebibliography}

\bibliographystyle{IEEEtran}

\end{document}